\DeclareMathOperator{\Tr}{Tr}
\DeclareMathOperator{\ad}{ad} % adjoint ad_X (Y) = [X,Y]  
\newcommand{\exponential}[1]{\ensuremath{{\mathrm e}^{#1}}}
\newcommand{\bydefinition}{\mathrm{def}}
\newcommand{\diff}{\mathrm{d}}
\renewcommand{\vec}[1]{\ensuremath{\mathbf{#1}}}
\renewcommand{\vec}[1]{\ensuremath{\bm{#1}}}%
\newcommand{\tensorq}[1]{\ensuremath{\mathbb{#1}}}      % tensorial quantity
\newcommand{\transpose}[1]{#1^\top}
\newcommand{\inverse}[1]{#1^{-1}}
\newcommand{\identity}{\ensuremath{\tensorq{I}}} % identity
\newcommand{\foidentity}{\ensuremath{\mathcal{I}}} % fourth order identity tensor
\newcommand{\tensorzero}{{\mathbb{O}}} % zero tensor
\newcommand{\fgrad}{\tensorq{F}}
\newcommand{\lcg}{\tensorq{B}}
\newcommand{\linstrain}{\bbespilon} %requires \usepackage[bbgreekl]{mathbbol}
\newcommand{\linstrain}{\bbespilon} %requires \usepackage[bbgreekl]{mathbbol}
\newcommand{\henckystrain}{\tensorq{H}} % Hencky strain
\newcommand{\generictensor}{{\tensorq{A}}}
\newcommand{\gradasym}{\ensuremath{\tensorq{W}}}
\newcommand{\gradsym}{\ensuremath{\tensorq{D}}}
\newcommand{\gradvl}{\ensuremath{\tensorq{L}}}
\newcommand{\logspin}{\genspin[\mathrm{log}]}
\newcommand{\genspin}[1][\star]{\ensuremath{\tensorq{\Omega}}^{#1}}
\newcommand{\R}{\ensuremath{{\mathbb R}}}
\newcommand{\N}{\ensuremath{{\mathbb N}}}
\newcommand{\Z}{\ensuremath{{\mathbb Z}}}
\newcommand{\pd}[2]{\ensuremath{\frac{\partial {#1}}{\partial {#2}}}}
\newcommand{\dd}[2]{\ensuremath{\frac{\diff {#1}}{\diff {#2}}}}
\newcommand{\fid}[1]{\ensuremath{\accentset{\triangledown}{#1}}}
\newcommand{\gencofid}[1]{\ensuremath{\accentset{\medcircle_{\star}}{#1}}}
\newcommand{\logfid}[1]{\ensuremath{\accentset{\medcircle_{\mathrm{log}}}{#1}}}
\newcommand{\absnorm}[1]{\ensuremath{\left|#1\right|}}
\newcommand{\tensortensor}[2]{\ensuremath{#1 \otimes #2}}
\newcommand{\tensordot}[2]{\ensuremath{#1 \vdotdot #2}} 
\newcommand{\tensordot}[2]{\ensuremath{#1 : #2}} 
\newcommand{\vectordot}[2]{\ensuremath{#1 \bullet #2}}
\newcommand{\tensorschur}[2]{\ensuremath{#1 \circ #2}} % Schur/Hadamard product
\newcommand{\tensorf}[1]{{\mathfrak{#1}}}
\newcommand{\Mat}{\R^{d\times d}}
\newcommand{\SymMat}{\Mat_{\rm sym}}
\DeclareMathOperator{\diag}{diag}
\newtheorem{theorem}{Theorem}
\newtheorem{lemma}{Lemma}
\newtheorem{definition}{Definition}
\title[Representation formula for logarithmic derivative]{A new representation formula for the logarithmic corotational derivative---a case study in application of commutator based functional calculus}
\date{\today}
\author{Michal Bathory}
\address{
% University of Vienna,
% Faculty of Mathematics,
% Oskar-Morgenstern-Platz 1,
% 1090 Wien\\
% Austria
Faculty of Mathematics and Physics\\
Charles University\\
Sokolovsk\'a 83\\
Praha 8 -- Karl\'{\i}n\\
CZ 186\;75\\
Czech Republic
}
\email{bathory@karlin.mff.cuni.cz}
\author{Miroslav Bul\'{\i}\v{c}ek}
\address{
Faculty of Mathematics and Physics\\
Charles University\\
Sokolovsk\'a 83\\
Praha 8 -- Karl\'{\i}n\\
CZ 186\;75\\
Czech Republic
}
\email{mbul8060@karlin.mff.cuni.cz}
\author{Josef M\'alek}
\address{
Faculty of Mathematics and Physics\\
Charles University\\
Sokolovsk\'a 83\\
Praha 8 -- Karl\'{\i}n\\
CZ 186\;75\\
Czech Republic
}
\email{malek@karlin.mff.cuni.cz}
\author{V\'{\i}t Pr\r{u}\v{s}a}
\address{
Faculty of Mathematics and Physics\\
Charles University\\
Sokolovsk\'a 83\\
Praha 8 -- Karl\'{\i}n\\
CZ 186\;75\\
Czech Republic
}
\email{prusv@karlin.mff.cuni.cz}
\thanks{The authors thank the Czech Science Foundation, grant number 25-16592S, for its support.}
\keywords{mathematical modelling, continuum mechanics, matrix functional calculus}
\subjclass[2000]{%
  % See http://www.ams.org/msc/ for Mathematics Subject Classification
  15A60, %   	Norms of matrices, numerical range, applications of functional analysis to matrix theory
  74A20, %  	Theory of constitutive functions in solid mechanics
  76A02%   	Foundations of fluid mechanics
}
\begin{document}

\begin{abstract}
  The logarithmic corotational derivative is a key concept in rate-type constitutive relations in continuum mechanics. The derivative is defined in terms of the logarithmic spin tensor, which is a skew-symmetric tensor/matrix given by a relatively complex formula. Using a newly developed commutator based functional calculus, we derive a new representation formula for the logarithmic spin tensor. In addition to the result on the logarithmic corotational derivative we also use the newly developed functional calculus to answer some problems regarding the matrix logarithm and the monotonicity of stress-strain relations. These results document that the commutator based functional calculus is of general use in tensor/matrix analysis, and that the calculus allows one to seamlessly work with tensor/matrix valued functions and their derivatives. 
\end{abstract}

\maketitle

%\addtocontents{toc}{\protect\begin{multicols}{2}} % workaround for table of contents in two columns in amsart documentclass, see also the corresponding \addtocontents{toc}{\protect\end{multicols}} line at the end of the document
%\tableofcontents

%\linenumbers

\section{Introduction}
\label{sec:introduction}

The concept of objective time derivative of a tensor quantity is ubiquitous in continuum mechanics especially in theory of elasto-plastic solids, theory of hypo-elastic solids and theory of viscoelastic fluids, see, for example, \cite{truesdell.c.noll.w:non-linear*1} and~\cite{oldroyd.jg:on}. During the development of continuum mechanics the concept of objective derivative of a tensor quantity has been discussed from many viewpoints, see~\cite{szabo.l.balla.m:comparison}, \cite{marsden.je.hughes.tjr:mathematical}, \cite{kolev.b.desmorat.r:objective}, \cite{aubram.d:notes*1} and~\cite{neff.p.holthausen.s.ea:hypo-elasticity} to name a few, and several objective derivatives were introduced. Popular choices of objective derivatives are the \emph{upper-convected derivative}
\begin{equation}
  \label{eq:1}
  \fid{\overline{\generictensor}} =_{\bydefinition} \dd{\generictensor}{t} - \gradvl \generictensor - \generictensor \transpose{\gradvl},
\end{equation}
introduced by~\cite{oldroyd.jg:on}, and a family of \emph{corotational derivatives} defined by the formula
\begin{equation}
  \label{eq:3}
  \gencofid{\overline{\generictensor}} =_{\bydefinition} \dd{\generictensor}{t} + \generictensor \genspin - \genspin \generictensor.
\end{equation}
Here $\dd{}{t} = \pd{}{t} + \vectordot{\vec{v}}{\nabla}$ denotes the standard material time derivative, $\gradvl$ denotes the (Eulerian) velocity gradient, and the symbol~$\genspin$ denotes a skew-symmetric \emph{spin tensor}. A particular choice of spin tensor could be $\genspin =_{\bydefinition} \gradasym$, where $\gradasym$ is the skew-symmetric part of the (Eulerian) velocity gradient. This choice leads to the Jaumann--Zaremba corotational derivative.
% \begin{equation}
%   \label{eq:2}
%   \jfid{\overline{\generictensor}} =_{\bydefinition} \dd{\generictensor}{t} + \generictensor \gradasym - \gradasym \generictensor.
% \end{equation}
Another important \emph{corotational} objective time derivative is the \emph{logarithmic corotational derivative} introduced in the series of works by~\cite{xiao.h.bruhns.ot.ea:logarithmic,xiao.h.bruhns.ot.ea:objective,xiao.h.bruhns.ot.ea:explicit}, see also \cite{bruhns.ot.meyers.a.ea:on}. In this case the spin $\genspin$ in~\eqref{eq:3} is chosen as $\genspin =_{\bydefinition} \logspin$, where
\begin{equation}
  \label{eq:4}
  \logspin
    =_{\bydefinition}
    \gradasym
    +
    \sum_{\substack{\sigma, \tau =1 \\ \sigma \not = \tau}}^3
    \left[
      \left(
        \frac{1 + \frac{b_\sigma}{b_\tau}}{1 - \frac{b_\sigma}{b_\tau}}
        +
        \frac{2}{\ln \frac{b_\sigma}{b_\tau}}
      \right)
      \tensorq{P}_\sigma
      \gradsym
      \tensorq{P}_\tau
    \right]
    ,
  \end{equation}
  and where $b_{\sigma}$ denotes the $\sigma$-th eigenvalue of the left Cauchy--Green tensor $\lcg =_{\bydefinition} \fgrad \transpose{\fgrad}$, and where $\tensorq{P}_\sigma$  denotes the projection to the corresponding eigenspace, see~\cite[Equation 41]{xiao.h.bruhns.ot.ea:logarithmic} for in-depth discussion. The key property of the logarithmic corotational derivative is that \emph{the logarithmic corotational derivative of the Hencky strain tensor $\henckystrain$ yields directly the symmetric part of the (Eulerian) velocity gradient~$\gradsym$}, that is we have the identity
  \begin{equation}
    \label{eq:5}
    \logfid{\overline{\henckystrain}}
    =
    \gradsym,
  \end{equation}
  where the Hencky strain $\henckystrain$ is defined by the formula 
  \begin{equation}
    \label{eq:6}
    \henckystrain
    =_\bydefinition
    \frac{1}{2}
    \ln \lcg
    .
  \end{equation}
  Moreover, the pair Hencky strain--logarithmic corotational derivative is the only strain measure--corotational derivative pair that has the property~\eqref{eq:5}. Thanks to~\eqref{eq:5} the logarithmic derivative is especially suitable for rate-type formulations of elasticity, see, for example, \cite{xiao.h.bruhns.ot.ea:logarithmic,xiao.h.bruhns.ot.ea:natural}, and for other applications in continuum mechanics as well.

  Given the prominent position of the logarithmic derivative in continuum mechanics, the derivative has been the subject of thorough research. In particular, the representation formula~\eqref{eq:4} for logarithmic spin is of interest. In order to find the logarithmic spin, the representation formula~\eqref{eq:4} requires one to find the spectral decomposition of $\lcg$, which might be inconvenient. Using the Cayley--Hamilton theorem, \cite{xiao.h.bruhns.ot.ea:logarithmic} gave an alternative representation to the logarithmic spin, \cite[Equation 45]{xiao.h.bruhns.ot.ea:logarithmic}, but this formula still requires one to work directly with the eigenvalues of $\lcg$, which might be cumbersome in symbolic manipulations. In what follows, we derive a new representation formula for the logarithmic spin tensor $\logspin$, while the new representation formula is based on the newly developed \emph{commutator based functional calculus}.

  In addition to the result on the logarithmic corotational derivative, we also use the newly developed \emph{commutator based functional calculus} to answer several problems regarding the matrix logarithm and the monotonicity of stress-strain relations. These results document that the \emph{commutator based functional calculus} is of general use in tensor/matrix analysis, and that the calculus allows one to seamlessly work with tensor/matrix valued functions and their derivatives. 

  \section{Representation formula for the logarithmic spin}
  \label{sec:repr-form-logar}
  The new representation formula for the logarithmic spin~$\logspin$ exploits the commutator operator. The use of commutator operator is by no means surprising, since the commutator operator provides a characterization of the most important qualitative feature behind the tensor/matrix calculus---the fact that two matrices in general do not commute. For a given matrix $\generictensor$ the commutator operator $\ad_{\generictensor}$ on the space of matrices is defined as
  \begin{equation}
    \label{eq:8}
    \ad_{\generictensor} [\tensorq{X}] =_{\bydefinition} \generictensor \tensorq{X} - \tensorq{X} \generictensor.
  \end{equation}
  Utilising the commutator operator we can prove the following.
  
  \begin{theorem}[Representation formula for the logarithmic spin]
    \label{thr:1}
    Let $\gencofid{\overline{\left( \cdot \right)}}$ denote the corotational derivative~\eqref{eq:3} for which the derivative of the Hencky strain tensor $\henckystrain$, $\henckystrain =_{\bydefinition} \frac{1}{2} \ln \lcg$, yields the symmetric part of the velocity gradient $\gradsym$, that is it holds
    \begin{equation}
      \label{eq:26}
      \gencofid{\overline{\henckystrain}}
      =
      \gradsym.
    \end{equation}
    Then the spin~$\genspin$ in~\eqref{eq:3} is the logarithmic spin $\logspin$, and it is given by the formula
    \begin{equation}
      \label{eq:7}
      \logspin
      =
      \gradasym
      -
      \sigma\left( \ad_{\henckystrain} \right)
      \gradsym
      ,
    \end{equation}
    where $\ad_{\henckystrain}$ is the \emph{commutator operator} associated to $\henckystrain$, and where $\sigma$ denotes the function
    \begin{equation}
      \label{eq:9}
      \sigma(x) =_{\bydefinition} \coth x-\frac{1}{x},
    \end{equation}
    where $\coth x$ is the hyperbolic cotangent function $\coth x =_{\bydefinition} \frac{\exponential{x} + \exponential{-x}}{\exponential{x} - \exponential{-x}}$. The symbol $\sigma\left( \ad_{\henckystrain} \right)$ is defined by the corresponding \emph{formal power series} for the hyperbolic cotangent function, see~\cite[Formula 4.5.67]{abramowitz.m.stegun.ia:handbook}, that is we set
  \begin{equation}
    \label{eq:10}
    \sigma\left( \ad_{\henckystrain} \right)
    =
    \left.
      \left(
        \sum_{n=0}^{+\infty}
        \frac{2^{2n} \mathrm{B}_{2n} x^{2n-1}}{(2n)!}
        -
        \frac{1}{x}
      \right)
    \right|_{x = \ad_{\henckystrain} }
    =
    \sum_{n=1}^{+\infty}
    \frac{2^{2n} \mathrm{B}_{2n} \left( \ad_{\henckystrain} \right)^{2n-1}}{(2n)!},
  \end{equation}
  where $\mathrm{B}_k$ is the $k$-th Bernoulli number. 
\end{theorem}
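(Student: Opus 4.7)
The plan is to verify \eqref{eq:7} by direct computation of the material derivative of $\henckystrain$, from which $\genspin$ can be read off. Using \eqref{eq:3} together with the identity $\henckystrain \genspin - \genspin \henckystrain = \ad_{\henckystrain}[\genspin]$, condition \eqref{eq:26} is equivalent to
\begin{equation*}
\dd{\henckystrain}{t} + \ad_{\henckystrain}[\genspin] = \gradsym.
\end{equation*}
Hence if $\dd{\henckystrain}{t}$ can be exhibited as $\gradsym - \ad_{\henckystrain}[Z]$ for a skew-symmetric isotropic $Z$, the uniqueness of the logarithmic spin established by Xiao--Bruhns--Meyers forces $\genspin = Z$.

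The first step is to differentiate $\henckystrain = \frac{1}{2}\ln \lcg$ along the motion. Using the spectral decomposition $\lcg = \sum_\alpha b_\alpha \tensorq{P}_\alpha$, the commutator-based functional calculus (equivalently, the Daleckii--Krein formula) yields
\begin{equation*}
\dd{\henckystrain}{t} = \frac{1}{2}\sum_{\alpha,\beta}\frac{\ln b_\alpha - \ln b_\beta}{b_\alpha - b_\beta}\, \tensorq{P}_\alpha \dot{\lcg} \tensorq{P}_\beta,
\end{equation*}
with the diagonal coefficient $(\alpha=\beta)$ interpreted in the limit as $1/b_\alpha$. Combining this with the kinematic identity $\dot{\lcg} = \gradvl \lcg + \lcg \transpose{\gradvl}$ and $\gradvl = \gradsym + \gradasym$ gives
\begin{equation*}
\tensorq{P}_\alpha \dot{\lcg} \tensorq{P}_\beta = (b_\alpha + b_\beta)\tensorq{P}_\alpha \gradsym \tensorq{P}_\beta + (b_\beta - b_\alpha)\tensorq{P}_\alpha \gradasym \tensorq{P}_\beta.
\end{equation*}

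Setting $h_\alpha = \frac{1}{2}\ln b_\alpha$ (eigenvalues of $\henckystrain$) and using the elementary identity $\frac{(b_\alpha + b_\beta)(h_\alpha - h_\beta)}{b_\alpha - b_\beta} = (h_\alpha - h_\beta)\coth(h_\alpha - h_\beta)$ leads to
\begin{equation*}
\dd{\henckystrain}{t} = \sum_{\alpha,\beta}(h_\alpha - h_\beta)\coth(h_\alpha - h_\beta)\,\tensorq{P}_\alpha \gradsym \tensorq{P}_\beta - \sum_{\alpha,\beta}(h_\alpha - h_\beta)\tensorq{P}_\alpha \gradasym \tensorq{P}_\beta.
\end{equation*}
Now $\ad_{\henckystrain}$ acts on blocks $\tensorq{P}_\alpha X \tensorq{P}_\beta$ by scalar multiplication by $h_\alpha - h_\beta$, so the two sums are precisely the functional-calculus evaluations of $y\coth(y)$ on $\gradsym$ and of $y$ on $\gradasym$. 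Using the splitting $y\coth(y) = 1 + y\sigma(y)$ (the genesis of \eqref{eq:9}) yields
\begin{equation*}
\dd{\henckystrain}{t} = \gradsym - \ad_{\henckystrain}\bigl[\gradasym - \sigma(\ad_{\henckystrain})[\gradsym]\bigr],
\end{equation*}
whence $\genspin = \gradasym - \sigma(\ad_{\henckystrain})[\gradsym]$, which is \eqref{eq:7}. Skew-symmetry is automatic, since \eqref{eq:10} exhibits $\sigma(\ad_{\henckystrain})$ as a sum of odd powers of $\ad_{\henckystrain}$, and odd commutator powers with the symmetric tensor $\henckystrain$ map symmetric tensors to skew-symmetric ones.

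The main technical obstacle is the rigorous interpretation of $\sigma(\ad_{\henckystrain})$: the power series in \eqref{eq:10} converges only when the spectral radius of $\ad_{\henckystrain}$ lies below $\pi$, the distance to the nearest poles $\pm i\pi$ of $\coth$. For general Hencky strains, $\sigma(\ad_{\henckystrain})$ must instead be defined via its spectral action on the eigenblocks of $\ad_{\henckystrain}$, substituting the real eigenvalues $h_\alpha - h_\beta$ into the continuous extension of $\coth(y) - 1/y$ across $y = 0$. This is exactly the intrinsic definition supplied by the commutator-based functional calculus developed earlier in the paper, which also transparently handles the coincident-eigenvalue limit $b_\alpha = b_\beta$ in the Daleckii--Krein formula without recourse to case analysis.
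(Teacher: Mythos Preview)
Your argument is correct, and it reaches the same destination as the paper, but by a genuinely different route.

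The paper's proof is designed as a showcase for the commutator-based functional calculus: it never writes down a spectral decomposition or an eigenprojection. Instead it starts from the abstract requirement $\gencofid{\overline{\tensorf{f}(\lcg)}}=\gradsym$, uses the chain rule and the kinematic identity $\fid{\overline{\lcg}}=\tensorzero$ to obtain
\[
\pd{\tensorf{f}(\lcg)}{\lcg}\bigl[\gradsym\lcg+\lcg\gradsym\bigr]-\gradsym
=\ad_{\tensorf{f}(\lcg)}[\tensorq{\Omega}_0],
\qquad \genspin=\gradasym-\tensorq{\Omega}_0,
\]
and then invokes the ready-made commutator identity (Lemma~2, eq.~\eqref{eq:36})
\[
\pd{\ln\generictensor}{\generictensor}[\generictensor\tensorq{X}+\tensorq{X}\generictensor]
=\coth\!\left(\tfrac12\ad_{\ln\generictensor}\right)\ad_{\ln\generictensor}[\tensorq{X}],
\]
which in turn is derived from the Lie-theoretic formula~\eqref{eq:28} for the derivative of the matrix exponential. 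The hyperbolic cotangent thus enters the paper's argument through the algebraic inversion of $\eta(x)=(\exponential{x}-1)/x$, not through any eigenvalue computation. As a bonus, the paper also \emph{derives} that $\tensorf{f}$ must be $\tfrac12\ln\lcg$ via a trace argument, rather than assuming it.

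Your proof, by contrast, is spectral from the outset: you apply the Daleckii--Krein formula to $\tfrac12\ln\lcg$, work blockwise with the projections $\tensorq{P}_\alpha$, and obtain $\coth$ from the elementary identity $(b_\alpha+b_\beta)/(b_\alpha-b_\beta)=\coth(h_\alpha-h_\beta)$. Only at the end do you reinterpret the block sums as $f(\ad_{\henckystrain})$ acting on $\gradsym$ and $\gradasym$. This is closer in spirit to the original Xiao--Bruhns--Meyers derivation of~\eqref{eq:4}, followed by a translation into commutator language.

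What each approach buys: yours is more elementary and self-contained---it needs no prior lemmas about derivatives of matrix functions in commutator form---and it makes the connection with the classical formula~\eqref{eq:4} transparent. The paper's approach is the point of the paper: it demonstrates that the entire computation can be done symbolically in $\ad_{\henckystrain}$ without ever diagonalising $\lcg$, which is precisely the advantage claimed for the new calculus over ``the traditional approach via spectral decompositions.'' Your closing remark on convergence versus spectral definition of $\sigma(\ad_{\henckystrain})$ is well taken and matches the paper's own caveat and its Definition~\ref{dfn:1}.
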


As shown by \cite[Section 2]{xiao.h.bruhns.ot.ea:logarithmic}, there exists only one corotational derivative with the property~\eqref{eq:26}, therefore the formula~\eqref{eq:7} for the logarithmic spin must yield the same logarithmic spin tensor as that identified by~\cite{xiao.h.bruhns.ot.ea:logarithmic}, who have, however, used the representation~\eqref{eq:4}. Note that the power series in~\eqref{eq:10} is so far only a formal one, the power series can not be used beyond its radius of convergence. Interestingly, the function $\sigma$ introduced in~\eqref{eq:9} is \emph{defined for all} $x \in \R$. (Provided that it is defined by its limit at $x=0$.) This indicates that formulae based on the formal power series might be given another interpretation applicable beyond the radius of convergence of the corresponding power series.

Such an alternative interpretation is possible if we restrict ourselves to \emph{symmetric} tensors/matrices. The alternative interpretation is based on the newly developed \emph{commutator based functional calculus}, which is briefly introduced in Section~\ref{sec:conclusion}, and which is studied in detail in the rigorous follow-up work~\cite{bathory-calculus}. However, in the present work we, for the sake of clarity and simplicity, provide a straightforward proof of~\eqref{eq:7} based on manipulations with formal power series, and we shall tacitly assume that all the formal manipulations can be later rigorously justified in~\emph{commutator based functional calculus} and extended beyond the radius of convergence of the corresponding formal power series.

\subsection{Summary of known formulae for the commutator operator}
  \label{sec:form-comm-op}
  Before we proceed with the proof of Theorem~\ref{thr:1} we recall several formulae for the commutator. Clearly, the formulae listed below indicate the central role of the commutator operator in tensor/matrix analysis. 
  \begin{lemma}[Formulae for the commutator operator]
    \label{lm:1}
    Let $\generictensor, \tensorq{X} \in \Mat$ be arbitrary given matrices, and let $\tensorf{f}$ be a given differentiable tensor/matrix valued function, and let $s \in \Z$. Then
    \begin{subequations}
      \label{eq:29}
      \begin{align}
        \label{eq:28}
        \pd{\exponential{\generictensor}}{\generictensor} \left[ \tensorq{X} \right]
        &=
        \exponential{\generictensor} \frac{1 - \exponential{-\ad_{\generictensor}}}{\ad_{\generictensor}} [\tensorq{X}],
        \\
        \label{eq:23}
        \pd{\ln \generictensor}{\generictensor}
        \left[
        \tensorq{X}
        \right]
        &=
        \left.
        \left(
        \frac{\ad_{\tensorq{Y}}}{1 - \exponential{-\ad_{\tensorq{Y}}}} 
        \left[
        \exponential{- \tensorq{Y}}
        \tensorq{X}
        \right]
        \right)
        \right|_{\tensorq{Y} = \ln \generictensor}
        ,
        \\
        \label{eq:20}
        \exponential{s\tensorq{X}} \tensorq{Y} \exponential{-s\tensorq{X}} &= \exponential{s\ad_{\tensorq{X}}} \left[ \tensorq{Y} \right],
        \\
        \label{eq:18}
        \ad_{\tensorf{f}(\generictensor)} [\tensorq{X}] &= \pd{\tensorf{f}(\generictensor)}{\generictensor} \left[ \ad_{\generictensor} \left[ \tensorq{X} \right]\right],
      \end{align}
    \end{subequations}
    provided that the expressions on either side of the identities are well-defined, in particular provided that the formal power series converge.
  \end{lemma}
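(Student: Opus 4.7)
The plan is to prove the four identities in the cascade \eqref{eq:20} $\Rightarrow$ \eqref{eq:28} $\Rightarrow$ \eqref{eq:23}, with \eqref{eq:18} handled separately by reduction to polynomials. First I would establish \eqref{eq:20}, which is the classical Hadamard lemma: setting $g(s) =_{\bydefinition} \exponential{s\tensorq{X}} \tensorq{Y} \exponential{-s\tensorq{X}}$, direct differentiation gives $g'(s) = \ad_{\tensorq{X}}[g(s)]$ with $g(0) = \tensorq{Y}$, so the unique solution of this linear matrix ODE is $g(s) = \exponential{s \ad_{\tensorq{X}}}[\tensorq{Y}]$.

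Identity \eqref{eq:28} then follows from Duhamel's formula: with $F(t) =_{\bydefinition} \exponential{\generictensor + t \tensorq{X}}$, variation of parameters applied to $\partial_s \exponential{s(\generictensor + t\tensorq{X})} = (\generictensor + t\tensorq{X}) \exponential{s(\generictensor + t\tensorq{X})}$ yields
\begin{equation*}
  F'(0) = \int_0^1 \exponential{(1-r)\generictensor} \tensorq{X} \exponential{r \generictensor} \, dr = \exponential{\generictensor} \int_0^1 \exponential{-r \generictensor} \tensorq{X} \exponential{r \generictensor} \, dr,
\end{equation*}
and invoking \eqref{eq:20} inside the integral rewrites the integrand as $\exponential{-r \ad_{\generictensor}}[\tensorq{X}]$, which integrates termwise to $\frac{1 - \exponential{-\ad_{\generictensor}}}{\ad_{\generictensor}}[\tensorq{X}]$. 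Identity \eqref{eq:23} then drops out of the inverse function theorem applied to $\exponential{\ln \generictensor} = \generictensor$: the chain rule identifies $\pd{\ln \generictensor}{\generictensor}$ with the operator inverse of $\pd{\exponential{\tensorq{Y}}}{\tensorq{Y}}$ at $\tensorq{Y} = \ln \generictensor$, so inverting \eqref{eq:28} (by first stripping off the left factor $\exponential{\tensorq{Y}}$ and then inverting the operator-valued power series $\frac{1 - \exponential{-\ad_{\tensorq{Y}}}}{\ad_{\tensorq{Y}}}$) delivers the stated expression. Finally, \eqref{eq:18} can be checked on monomials $\tensorf{f}(\generictensor) = \generictensor^n$ via the product rule $\pd{\generictensor^n}{\generictensor}[\tensorq{H}] = \sum_{k=0}^{n-1} \generictensor^k \tensorq{H} \generictensor^{n-1-k}$, which for $\tensorq{H} = \generictensor\tensorq{X} - \tensorq{X} \generictensor$ telescopes to $\generictensor^n \tensorq{X} - \tensorq{X} \generictensor^n = \ad_{\generictensor^n}[\tensorq{X}]$; linearity and power-series expansion then yield the full statement.

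The main obstacle I expect is not the mechanics of these derivations, but the rigorous interpretation of the operator-valued expressions $\frac{1 - \exponential{-\ad_{\generictensor}}}{\ad_{\generictensor}}$ and its formal inverse $\frac{\ad_{\tensorq{Y}}}{1 - \exponential{-\ad_{\tensorq{Y}}}}$: both must be defined through the appropriate Taylor series at the origin (with the removable singularities handled), the termwise integration must be justified, and the formal inversion of an operator-valued power series acting on $\Mat$ must be verified coefficient-by-coefficient. Under the convergence hypothesis stated in the lemma this reduces to routine bookkeeping, and a fully rigorous treatment is deferred to the commutator-based functional calculus developed in the companion work~\cite{bathory-calculus}.
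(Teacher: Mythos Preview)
Your proposal is correct, and for identities \eqref{eq:23} and \eqref{eq:18} it matches the paper's argument exactly (chain rule inversion of \eqref{eq:28}, and the telescoping computation on monomials extended by linearity). For \eqref{eq:20} and \eqref{eq:28}, however, you take a genuinely different route. You prove the Campbell formula by the ODE argument ($g'(s)=\ad_{\tensorq{X}}[g(s)]$ forces $g(s)=\exponential{s\ad_{\tensorq{X}}}[\tensorq{Y}]$), and then obtain \eqref{eq:28} via the Duhamel integral, feeding \eqref{eq:20} into the integrand and integrating termwise. The paper instead stays purely within formal power series: it first proves a closed formula for $\ad_{\generictensor}^m[\tensorq{X}]$ by induction on the binomial identity, then obtains \eqref{eq:20} by comparing this against the Cauchy product expansion of $\exponential{\generictensor}\tensorq{X}\exponential{-\generictensor}$; for \eqref{eq:28} it differentiates the algebraic identity $\exponential{-\generictensor}\generictensor\exponential{\generictensor}=\generictensor$ in the direction $\tensorq{X}$, which (after using \eqref{eq:20} on one of the resulting terms) yields an operator equation $\ad_{\generictensor}\bigl[\exponential{-\generictensor}\pd{\exponential{\generictensor}}{\generictensor}[\tensorq{X}]\bigr]=(1-\exponential{-\ad_{\generictensor}})[\tensorq{X}]$ that is solved formally. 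Your analytic approach is shorter and is the textbook Lie-theory proof; the paper's approach has the advantage of never leaving the realm of formal series manipulations (no integrals, no ODE uniqueness), which is consistent with its stated goal of building a calculus where every step is a purely algebraic identity among power series.
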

  In Lemma~\ref{lm:1} we can indeed work with arbitrary matrices (not necessarily symmetric) as long as we work with the functional calculus based on power series expansions, that is if we define
  \begin{equation}
    \label{eq:15}
    \exponential{\generictensor} =_{\bydefinition} \sum_{n=0}^{+\infty} \frac{\generictensor^n}{n!},
  \end{equation}
  and 
  \begin{equation}
    \label{eq:16}
    \ln \generictensor =_{\bydefinition} \sum_{n=1}^{+\infty} (-1)^{n+1} \frac{\left( \generictensor - \identity \right)^n}{n},
  \end{equation}
  whenever the power series converges. (The same approach is taken for the tensor function $\tensorf{f}$---it is defined by its power series whenever the series converges.) Furthermore, in Lemma~\ref{lm:1} we work with the standard G\^ateaux derivative of tensor/matrix valued function~$\tensorf{f}$ at point $\generictensor$ in the direction $\tensorq{X}$, that is we denote
  \begin{equation}
    \label{eq:17}
    \pd{\tensorf{f} \left( \generictensor \right)}{\generictensor}
    [\tensorq{X}]
    =
    _{\bydefinition}
    \left.
      \dd{}{s}
      \tensorf{f} \left(\generictensor + s \tensorq{X} \right)
    \right|_{s=0}
    .
  \end{equation}

  The formula~\eqref{eq:28} for the G\^ateaux derivative of tensor/matrix exponential is a well-known formula especially in Lie group theory, see, for example, \cite[Theorem 5.4]{hall.b:lie}. Here the symbol $\frac{1 - \exponential{-\ad_{\generictensor}}}{\ad_{\generictensor}}$ is again defined by a formal power series, that is we first introduce the function
  \begin{equation}
    \label{eq:19}
    \eta(x) =_{\bydefinition} \frac{\exponential{x}-1}{x},
  \end{equation}
  and using the known power series for the real function $\eta$ we then define
  \begin{equation}
    \label{eq:25}
    \eta(-\ad_{\generictensor}) =_{\bydefinition}
    \left.
      \left(
        \frac{1 - \sum_{n=0}^{+\infty} \frac{\left(-x\right)^n}{n!}}{x}
      \right)
    \right|_{x = \ad_{\generictensor}}
    =
    \sum_{n=0}^{+\infty} \frac{\left(-1\right)^n}{\left(n+1\right)!} \left( \ad_{\generictensor} \right)^n.
  \end{equation}

  Concerning the inverse function to the exponential---the matrix logarithm---the chain rule gives
  \begin{equation}
    \label{eq:22}
    \foidentity
    \left[
      \tensorq{X}
    \right]
    =
    \pd{\generictensor}{\generictensor}
    \left[
      \tensorq{X}
    \right]
    =
    \pd{\exponential{\ln \generictensor}}{\generictensor}
    \left[
      \tensorq{X}
    \right]
    =
    \left.
      \pd{\exponential{\tensorq{Y}}}{\tensorq{Y}}
    \right|_{\tensorq{Y} = \ln \generictensor}
    \left[
      \pd{\ln \generictensor}{\generictensor}
      \left[
        \tensorq{X}
      \right]
    \right]
    =
    \left.
      \exponential{\tensorq{Y}} \frac{1 - \exponential{-\ad_{\tensorq{Y}}}}{\ad_{\tensorq{Y}}}
    \right|_{\tensorq{Y} = \ln \generictensor}
    \left[
      \pd{\ln \generictensor}{\generictensor}
      \left[
        \tensorq{X}
      \right]
    \right]
    ,
  \end{equation}
  where $\foidentity$ denotes the fourth order identity tensor, and where we have used the formula~\eqref{eq:28} for the derivative of exponential function. Equation~\eqref{eq:22} can be formally solved by a simple manipulation, and we see that~\eqref{eq:22} implies~\eqref{eq:23}, where the symbol
  $
  \frac{\ad_{\tensorq{Y}}}{1 - \exponential{- \ad_{\tensorq{Y}}}} 
  $
  is again interpreted via the corresponding power series. Consequently, we have an explicit expression for the derivative of logarithm function as well.

  Formula~\eqref{eq:20} is the \emph{Campbell formula}, see, for example, \cite[Proposition~3.35]{hall.b:lie}, where the symbol $\exponential{\ad_{\tensorq{X}}}$ is again defined via the formal power series. (Note that $s \ad_{\tensorq{X}} = \ad_{s \tensorq{X}}$.) Finally, if~$f$ is a function with a given power series expansion $f(x) = \sum_{n=0}^{+\infty} f_nx^n$, where the coefficients $\left\{ f_n \right\}_{n=0}^{+\infty}$ are constants, then it is straightforward to see that for the corresponding tensor/matrix function $\tensorf{f}(\generictensor) = \sum_{n=0}^{+\infty} f_n\generictensor^n$ we have  
  \begin{equation}
    \label{eq:21}
    \ad_{\tensorf{f}(\generictensor)} [\tensorq{X}]
    =
    \left(
      \sum_{n=0}^{+\infty} f_n \generictensor^n
    \right)
    \tensorq{X}
    -
    \tensorq{X}
    \left(
      \sum_{n=0}^{+\infty} f_n \generictensor^n
    \right)
    .
  \end{equation}
  On the other hand using the definition of G\^ateaux derivative we have
  \begin{equation}
    \label{eq:27}
    \pd{\tensorf{f}(\generictensor)}{\generictensor} \left[ \ad_{\generictensor} \left[ \tensorq{X} \right]\right]
    =
    \left.
      \dd{}{s}
      \tensorf{f} \left(\generictensor + s  \ad_{\generictensor} \left[ \tensorq{X} \right] \right)
    \right|_{s=0}
    =
    \left.
      \dd{}{s}
      \left(
        f_0 \identity
        +
        f_1
        \left(
          \generictensor
          +
          s
          \ad_{\generictensor} \left[ \tensorq{X} \right]
        \right)
        +
        f_2
        \left(
          \generictensor
          +
          s
          \ad_{\generictensor} \left[ \tensorq{X} \right]
        \right)^2
        +
        \cdots
      \right)
    \right|_{s=0}
    .
  \end{equation}
  We calculate the derivatives of the powers,
  \begin{multline}
    \label{eq:99}
    \left.
      \dd{}{s}
      \left(
        \generictensor + s \ad_{\generictensor} \left[ \tensorq{X} \right]
      \right)^n
    \right|_{s=0}
    =
    \left.
      \dd{}{s}
      \left(
        \generictensor^n
        +
        s
        \sum_{k=0}^{n-1}
        \generictensor^k
        \ad_{\generictensor} \left[ \tensorq{X} \right]
        \generictensor^{n-1-k}
        +
        \cdots
      \right)
    \right|_{s=0}
    =
    \sum_{k=0}^{n-1}
    \generictensor^{k+1}
    \tensorq{X}
    \generictensor^{n-1-k}
    -
    \sum_{k=0}^{n-1}
    \generictensor^k
    \tensorq{X}
    \generictensor^{n-k}
    \\
    =
    \sum_{k=1}^{n}
    \generictensor^{k}
    \tensorq{X}
    \generictensor^{n-k}
    -
    \sum_{k=0}^{n-1}
    \generictensor^k
    \tensorq{X}
    \generictensor^{n-k}
    =
    \generictensor^n
    \tensorq{X}
    -
    \tensorq{X}
    \generictensor^n
    ,
  \end{multline}
  and using~\eqref{eq:99} in the formula for the G\^ateaux derivative~\eqref{eq:27}, we get the commutator formula~\eqref{eq:21}.

  We note that the presented proofs, among others, exploit the crucial fact that the standard calculus rules such as the derivative of product/sum can be applied as usual, and that the composition of linear operators can be interpreted as multiplication of the corresponding matrices.  

  \subsection{Proof of representation formula in Theorem~\ref{thr:1}}
  \label{sec:proof-repr-form}
  Using Lemma~\ref{lm:1} we can easily derive several important formulae for the matrix logarithm. 
  \begin{lemma}[Identities for matrix logarithm]
    \label{lm:2}
    Let $\generictensor, \tensorq{Y} \in \Mat$ be arbitrary given matrices, and let $p, s \in \Z$ be given integers such that $p - s =1$. Then
    \begin{subequations}
      \label{eq:33}
      \begin{align}
        \label{eq:32}
        \generictensor^p  \tensorq{Y} \generictensor^{-s}
        &=
        \generictensor
        \exponential{s\ad_{\ln \generictensor}} \left[ \tensorq{Y} \right]
          ,
        \\
        \label{eq:37}
        \pd{\ln \generictensor}{\generictensor}
        \left[
        \generictensor^p \tensorq{Y} \generictensor^{-s}
        \right]
        &=
          \frac{
          1
          }
          {
          \frac
          {1-\exponential{-\ad_{\ln \generictensor}}
          }
          {
          \ad_{\ln \generictensor}}
          }
          \exponential{s \ad_{\ln \generictensor}}
          \left[
          \tensorq{Y}
          \right]
        \\
        \label{eq:34}
        \pd{\ln \generictensor}{\generictensor} \left[ \generictensor \tensorq{Y} - \tensorq{Y} \generictensor \right]
        &=
          \ad_{\ln \generictensor} [\tensorq{Y}]
          ,
        \\
        \label{eq:36}
        \pd{\ln \generictensor}{\generictensor} \left[ \generictensor \tensorq{Y} + \tensorq{Y} \generictensor \right]
        &=
          \coth\left(\frac{1}{2} \ad_{ \ln \generictensor} \right)
          \ad_{\ln \generictensor}
          \left[
          \tensorq{Y}
          \right]
        ,
      \end{align}
      where the symbols
      $
      \frac{
        1
      }
      {
        \frac
        {1-\exponential{-\ad_{\ln \generictensor}}
        }
        {
          \ad_{\ln \generictensor}
        }
      }
      \exponential{s \ad_{\ln \generictensor}}$
      and
      $
      \coth\left( \frac{1}{2} \ad_{\ln \generictensor} \right) \ad_{\ln \generictensor}
      $
      are defined via formal power series corresponding to the functions $\frac{1}{\frac{1 - \exponential{-x}}{x}} \exponential{s x}$ and $ \coth\left( \frac{1}{2} x \right) x$ respectively.  The identities hold provided that the expressions on either side of the identities are well-defined, in particular provided that the formal power series converge.

    \end{subequations}
  \end{lemma}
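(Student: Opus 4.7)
The plan is to derive the four identities in sequence, starting from the Campbell formula~\eqref{eq:20} and the logarithm derivative formula~\eqref{eq:23} already collected in Lemma~\ref{lm:1}, and then obtaining~\eqref{eq:34} and~\eqref{eq:36} as immediate linear combinations of two special cases of~\eqref{eq:37}.

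First I would prove~\eqref{eq:32}. Since $p - s = 1$, I rewrite $\generictensor^p \tensorq{Y} \generictensor^{-s} = \generictensor \cdot \generictensor^s \tensorq{Y} \generictensor^{-s}$, identify $\generictensor^s = \exponential{s \ln \generictensor}$ from the power series definition~\eqref{eq:15}--\eqref{eq:16}, and then invoke the Campbell formula~\eqref{eq:20} with $\tensorq{X} = \ln \generictensor$ to conclude $\generictensor^s \tensorq{Y} \generictensor^{-s} = \exponential{s \ad_{\ln \generictensor}}[\tensorq{Y}]$.

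Second, to prove~\eqref{eq:37}, I substitute $\tensorq{X} = \generictensor^p \tensorq{Y} \generictensor^{-s}$ into~\eqref{eq:23}. By~\eqref{eq:32} and because $\exponential{-\ln \generictensor} = \generictensor^{-1}$, the argument simplifies to $\exponential{-\ln \generictensor} \tensorq{X} = \exponential{s \ad_{\ln \generictensor}}[\tensorq{Y}]$, and~\eqref{eq:37} follows directly upon rewriting $\frac{\ad_{\ln \generictensor}}{1 - \exponential{-\ad_{\ln \generictensor}}}$ as $\frac{1}{(1-\exponential{-\ad_{\ln \generictensor}})/\ad_{\ln \generictensor}}$.

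Third, I specialize~\eqref{eq:37} to the two admissible choices $(p,s) = (1,0)$ and $(p,s) = (0,-1)$, yielding
\begin{align*}
\pd{\ln \generictensor}{\generictensor}[\generictensor \tensorq{Y}] &= \frac{\ad_{\ln \generictensor}}{1 - \exponential{-\ad_{\ln \generictensor}}}[\tensorq{Y}], \\
\pd{\ln \generictensor}{\generictensor}[\tensorq{Y} \generictensor] &= \frac{\ad_{\ln \generictensor}}{1 - \exponential{-\ad_{\ln \generictensor}}} \exponential{-\ad_{\ln \generictensor}}[\tensorq{Y}].
\end{align*}
Subtracting gives $\pd{\ln \generictensor}{\generictensor}[\generictensor \tensorq{Y} - \tensorq{Y} \generictensor] = \ad_{\ln \generictensor}[\tensorq{Y}]$, which is~\eqref{eq:34}, since the factor $(1-\exponential{-\ad_{\ln \generictensor}})$ cancels telescopically. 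Adding gives $\pd{\ln \generictensor}{\generictensor}[\generictensor \tensorq{Y} + \tensorq{Y} \generictensor] = \ad_{\ln \generictensor} \frac{1 + \exponential{-\ad_{\ln \generictensor}}}{1 - \exponential{-\ad_{\ln \generictensor}}}[\tensorq{Y}]$, and the scalar identity $\frac{1 + \exponential{-x}}{1 - \exponential{-x}} = \frac{\exponential{x/2} + \exponential{-x/2}}{\exponential{x/2} - \exponential{-x/2}} = \coth(x/2)$, applied at the level of the formal power series, yields~\eqref{eq:36}.

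The computation is essentially algebraic once~\eqref{eq:20} and~\eqref{eq:23} are in hand. The only delicate point, which I would flag but not belabour in line with the paper's convention, is that all manipulations of $\exponential{\pm \ad_{\ln \generictensor}}$, $\ad_{\ln \generictensor}$, and their rational combinations are interpreted through formal power series, so that the cancellation of $(1 - \exponential{-\ad_{\ln \generictensor}})$ and the rewriting via $\coth(\cdot/2)$ are legitimate identities of formal operators; their validity beyond the formal level on symmetric matrices is deferred to the forthcoming rigorous treatment in~\cite{bathory-calculus}.
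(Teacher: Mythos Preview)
Your proposal is correct and matches the paper's proof almost exactly: identity~\eqref{eq:32} via Campbell, identity~\eqref{eq:37} by substituting into~\eqref{eq:23}, and identity~\eqref{eq:36} by adding the $(p,s)=(1,0)$ and $(p,s)=(0,-1)$ specializations of~\eqref{eq:37}. The only difference is that for~\eqref{eq:34} the paper invokes~\eqref{eq:18} directly (with $\tensorf{f}=\ln$), whereas you obtain it by subtracting the same two specializations of~\eqref{eq:37}; both routes are one-line computations and equally valid.
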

  \begin{proof}
    Using Campbell formula~\eqref{eq:20}, we see that
    \begin{equation}
      \label{eq:30}
      \generictensor^s  \tensorq{Y} \generictensor^{-s}
      =
      \exponential{s\ln \generictensor} \tensorq{Y} \exponential{-s \ln \generictensor}
      =
      \exponential{s\ad_{\ln \generictensor}} \left[ \tensorq{Y} \right],
    \end{equation}
    which immediately implies~\eqref{eq:32}. The second identity~\eqref{eq:37} follows from a simple substitution into the formula for the derivative of matrix logarithm, see~\eqref{eq:23}, and the just derived identity~\eqref{eq:32},
    \begin{equation}
      \label{eq:31}
      \pd{\ln \generictensor}{\generictensor}
      \left[
        \generictensor^p \tensorq{Y} \generictensor^{-s}
      \right]
      =
      \frac{1}
      {
        \frac{
          1 - \exponential{-\ad_{\ln \generictensor}}
        }
        {
          \ad_{\ln \generictensor}
        }
      }
      \left[
        \exponential{- \ln \generictensor}
        \generictensor
        \exponential{s \ad_{\ln \generictensor}}
        \left[
          \tensorq{Y}
        \right]
      \right]
      .
    \end{equation}
    The identity~\eqref{eq:34} is just a special case of~\eqref{eq:18}. Finally, the last identity~\eqref{eq:36} is obtained from~\eqref{eq:37} with $p=1$, $s=0$ and $p=0$, $s=-1$ respectively via a straightforward manipulation
    \begin{equation}
      \label{eq:35}
      \pd{\ln \generictensor}{\generictensor} \left[ \generictensor \tensorq{Y} + \tensorq{Y} \generictensor \right]
      =
      \pd{\ln \generictensor}{\generictensor} \left[ \generictensor \tensorq{Y}  \right]
      +
      \pd{\ln \generictensor}{\generictensor} \left[ \tensorq{Y} \generictensor \right]
      =
      \frac{
        1
      }
      {
        \frac
        {1-\exponential{-\ad_{\ln \generictensor}}
        }
        {
          \ad_{\ln \generictensor}
        }
      }
      \left[
        \tensorq{Y}
      \right]
      +
      \frac{
        1
      }
      {
        \frac
        {1-\exponential{-\ad_{\ln \generictensor}}
        }
        {
          \ad_{\ln \generictensor}
        }
      }
      \exponential{- \ad_{\ln \generictensor}}
      \left[
        \tensorq{Y}
      \right]
      \\
      =
      \frac{
        1
        +
        \exponential{-\ad_{\ln \generictensor}}
      }
      {
        \frac
        {1-\exponential{-\ad_{\ln \generictensor}}
        }
        {
          \ad_{\ln \generictensor}}
      }
      \left[
        \tensorq{Y}
      \right]
      =
      \coth\left( \frac{1}{2} \ad_{ \ln \generictensor} \right)
      \ad_{\ln \generictensor}
      \left[
        \tensorq{Y}
      \right]
      ,
    \end{equation}
    where we have also used trivial identity
    \begin{equation}
      \label{eq:49}
      \tensorf{f} \left( \ad_{\generictensor} \right) \ad_{\generictensor} = \ad_{\generictensor} \tensorf{f} \left( \ad_{\generictensor} \right).
    \end{equation}
  \end{proof}

  \begin{proof}[Proof of Theorem~\ref{thr:1}]
    Having obtained auxiliary results in Lemma~\ref{lm:2} we can return to the original problem. The problem is to find a spin tensor~$\genspin$ and a function $\tensorf{f}$ of the left Cauchy--Green tensor $\lcg$ such that the corotational derivative~\eqref{eq:3} of the function $\tensorf{f}(\lcg)$ yields the symmetric part of the velocity gradient $\gradsym$. In other words we want to find  $\tensorf{f}(\lcg)$ and $\genspin$ such that
  \begin{equation}
    \label{eq:24}
     \gencofid{\overline{\tensorf{f}(\lcg)}} = \gradsym.
   \end{equation}
   Further restrictions on the function $\tensorf{f}$ come from the requirement of $\tensorf{f}$ being a \emph{genuine strain measure}. This means that the function $\tensorf{f}$ should be an isotropic tensorial function, for definition see, for example, \cite{truesdell.c.noll.w:non-linear*1}. Furthermore, the function $\tensorf{f}$ should vanish in the case of no deformation, that is for $\lcg = \identity$ we want
   \begin{subequations}
     \label{eq:38}
     \begin{equation}
       \label{eq:41}
            \left. \tensorf{f} \left(\lcg \right) \right|_{\lcg = \identity} = \tensorzero.
     \end{equation}
   Finally, we want $\tensorf{f}(\lcg)$ to be identical to the small strain tensor $\linstrain$ in the case of small deformations. For small deformations (small displacement gradients) we have $\lcg \approx \identity + 2 \linstrain$, and we want $\left. \tensorf{f} \left(\lcg \right) \right|_{\lcg = \identity + 2 \linstrain} \approx \linstrain$, which in virtue of~\eqref{eq:41} yields a condition on the first derivative of $\tensorf{f}$,
   \begin{equation}
     \label{eq:56}
     \left. \pd{\tensorf{f}}{\lcg} \right|_{\lcg = \identity} \left[ \tensorq{X} \right] = \frac{1}{2} \tensorq{X}.
   \end{equation}
   \end{subequations}   

   We start with the required identity~\eqref{eq:24}, and we try to identify the suitable spin tensor. Differentiating the left-hand side of~\eqref{eq:24} according to the definition of a corotational rate~\eqref{eq:3}, and using the chain rule yields
   \begin{equation}
     \label{eq:39}
     \pd{\tensorf{f}(\lcg)}{\lcg}
     \left[
       \dd{\lcg}{t}
     \right]
     +
     \tensorf{f} \left( \lcg \right)
     \genspin
     -
     \genspin
     \tensorf{f} \left( \lcg \right)
     =
     \gradsym
     .
   \end{equation}
   The upper convected derivative of $\lcg$ is known to vanish, $\fid{\overline{\lcg}} = \tensorzero$, which yields
   \begin{equation}
     \label{eq:40}
     \dd{\lcg}{t} - \gradvl \lcg - \lcg \transpose{\gradvl} = \tensorzero.
   \end{equation}
   (This fact is a simple consequence of the definition of Cauchy--Green tensor $\lcg =_{\bydefinition} \fgrad \transpose{\fgrad}$ and the identity $\dd{\fgrad}{t} = \gradvl \fgrad$.) In~\eqref{eq:40} we use the decomposition of the velocity gradient to the symmetric and the skew-symmetric part, $\gradvl = \gradsym + \gradasym$, and we substitute the so obtained formula for the time derivative $\dd{\lcg}{t}$ into~\eqref{eq:39}, which yields
   \begin{equation}
     \label{eq:42}
     \pd{\tensorf{f}(\lcg)}{\lcg}
     \left[
       \gradsym \lcg
       +
       \lcg \gradsym
     \right]
     +
     \pd{\tensorf{f}(\lcg)}{\lcg}
     \left[
       \gradasym \lcg
       -
       \lcg \gradasym
     \right]
     +
     \tensorf{f} \left( \lcg \right)
     \genspin
     -
     \genspin
     \tensorf{f} \left( \lcg \right)
     =
     \gradsym
     .
   \end{equation}
   We rearrange the terms, and we get
   \begin{equation}
     \label{eq:43}
     \pd{\tensorf{f}(\lcg)}{\lcg}
     \left[
       \gradsym \lcg
       +
       \lcg \gradsym
     \right]
     -
     \gradsym
     =
     -
     \ad_{\tensorf{f}(\lcg)} \left[ \genspin \right]
     +
     \pd{\tensorf{f}(\lcg)}{\lcg}
     \left[
       \lcg \gradasym
       -
       \gradasym \lcg
     \right]
     .
   \end{equation}
    If we denote $\genspin=\gradasym-\tensorq{\Omega}_0$ and search for $\tensorq{\Omega}_0$ instead of $\genspin$, then by virtue of \eqref{eq:18}, the above identity simplifies to
    \begin{equation}
     \label{eq:44}
     \pd{\tensorf{f}(\lcg)}{\lcg}
     \left[
       \gradsym \lcg
       +
       \lcg \gradsym
     \right]
     -
     \gradsym
     =
     \ad_{\tensorf{f}(\lcg)} \left[ \tensorq{\Omega}_0 \right]
     ,
   \end{equation}
   which must hold for arbitrary motion, that is for arbitrary $\gradsym$ and $\lcg$. If we use~\eqref{eq:44} in the special case $\gradsym \equiv \inverse{\lcg}$, and if take the trace on the both sides of~\eqref{eq:44}, then we get
   \begin{equation}
     \label{eq:38}
        2\Tr
        \left(
            \pd{\tensorf{f}(\lcg)}{\lcg}
                \left[
                    \identity
                \right]
        \right)
     =
     \Tr \inverse{\lcg}.
    \end{equation}
    If we restrict ourselves to isotropic functions $\tensorf{f}$, we see that~\eqref{eq:38} can only be satisfied for all $\lcg$ provided that
    \begin{equation}
      \label{eq:46}
        \tensorf{f}(\lcg)=\frac{1}{2}\ln \lcg + f_0 \identity
    \end{equation}
    for some constant $f_0$. However, should $\tensorf{f}$ define a \emph{genuine strain measure}, it is necessary that $\tensorf{f}(\identity)=\tensorzero$, see~\eqref{eq:41}, hence $f_0=0$. The second condition for $\tensorf{f}$ being a genuine strain measure, see~\eqref{eq:56}, is also satisfied by the choice~\eqref{eq:46}. 

    Note that we could have also guessed that $\tensorf{f}(\lcg)=\frac{1}{2}\ln\lcg$ by a simple observation. The equality~\eqref{eq:43}  should hold for arbitrary motion, that is for arbitrary $\gradsym$ and $\gradasym$, which indicates that on the left-hand side we would need
   \begin{equation}
     \label{eq:76}
     \pd{\tensorf{f}(\lcg)}{\lcg}
     \left[ \tensorq{X} \right]
     \sim
     \frac{1}{2} \inverse{\lcg} \tensorq{X},
   \end{equation}
   which indicates that $\tensorf{f}(\lcg) \sim \frac{1}{2}\ln\lcg$. (Here we naively use the standard formula for the real function $\ln x$, $\dd{}{x} \ln x = \frac{1}{x}$.) Indeed, if~\eqref{eq:44} was true and $\gradsym$ would be commuting with $\lcg$, then the left-hand side would vanish, and we would have to deal with the right-hand side only, which could be made zero by a suitable choice of $\genspin$. Unfortunately, the tensor/matrix derivative is more involved to work with, and we also have problems with the commutativity of $\gradsym$ and $\lcg$. Nevertheless, as shown by the argument following~\eqref{eq:44}, the observation that $\tensorf{f}(\lcg)=\frac{1}{2}\ln\lcg$ is correct.

    Returning with the information that $\tensorf{f}(\lcg)=\frac{1}{2}\ln\lcg$ to \eqref{eq:44}, we can apply Lemma~\ref{lm:2}---identity~\eqref{eq:34} and \eqref{eq:36}---and we can rewrite~\eqref{eq:44} as
   \begin{equation}
     \label{eq:45}
     \coth\left(\ad_{\frac{1}{2} \ln \lcg} \right)
     \ad_{\frac{1}{2} \ln \lcg}
     \left[
       \gradsym
     \right]
     -
     \gradsym
     =
     \ad_{ \frac{1}{2} \ln \lcg} \left[ \tensorq{\Omega}_0 \right]
     ,
   \end{equation}
   where we have also used the trivial observation  $\frac{1}{2} \ad_{\generictensor} = \ad_{\frac{1}{2} \generictensor}$. Now we use the commutator based calculus, and we rewrite the last equality as
   \begin{equation}
     \label{eq:75}
     \ad_{ \frac{1}{2} \ln \lcg}
     \underbrace{
     \left(
       \coth\left(\ad_{\frac{1}{2} \ln \lcg} \right)
       -
       \frac{1}{\ad_{\frac{1}{2} \ln \lcg}}
     \right)
     }_{\sigma \left(\ad_{\frac{1}{2} \ln \lcg}\right)}
     \left[
       \gradsym
     \right]
     =
     \ad_{ \frac{1}{2} \ln \lcg} \left[ \tensorq{\Omega}_0 \right]
     .
   \end{equation}
   We note that the round bracket, or the symbol $\sigma (\ad_{\frac{1}{2} \ln \lcg})$, is well-defined by the corresponding power series. (No negative powers are present in the power series.) The quantity~$\gradsym$ in the last equation~\eqref{eq:75} can take arbitrary values, hence the equation holds if and only if we set $\tensorq{\Omega}_0=\sigma (\ad_{\frac{1}{2} \ln \lcg})$, that is
   \begin{equation}
     \label{eq:47}
     \genspin
     =
     \gradasym
     -
     \sigma \left(\ad_{\frac{1}{2} \ln \lcg}\right)
     \left[
       \gradsym
     \right]
     .
   \end{equation}
   This is the representation formula~\eqref{eq:7} in Theorem~\ref{thr:1}.
 \end{proof}

 \section{Applications}
 \label{sec:applications}
 The proposed calculus can be exploited beyond the proof of Theorem~\ref{thr:1}. In what follows we discuss some applications related to some matrix/tensor analysis problems in continuum mechanics. 

 \subsection{Identities involving the derivative of matrix logarithm }
 \label{sec:form-deriv-matr}
 For later use we first derive some additional identities for the matrix logarithm.
 \begin{lemma}[Formulae for the derivative of matrix logarithm]
   \label{lm:4}
   Let $\generictensor \in \Mat$ and~$\tensorq{X} \in \Mat$ be given arbitrary matrices, and let $p, s \in \Z$ be given integers such that $p-s=1$. Then 
   \begin{subequations}
     \label{eq:57}
     \begin{align}
       \label{eq:58}
       \pd{\ln \generictensor}{\generictensor}
       \left[
       \generictensor^p \tensorq{X} \generictensor^{-s}
       -
       \generictensor^{-s} \tensorq{X} \generictensor^p
       \right]
       &=
         \frac{\sinh \left( \frac{p+s}{2} \ad_{\ln \generictensor} \right)}{\sinh \left( \frac{1}{2} \ad_{\ln \generictensor} \right)}
         \ad_{\ln \generictensor} [\tensorq{X}]
          ,
        \\
       \label{eq:59}
       \pd{\ln \generictensor}{\generictensor}
       \left[
       \generictensor^p \tensorq{X} \generictensor^{-s}
       +
       \generictensor^{-s} \tensorq{X} \generictensor^p
       \right]
       &=
         \frac{\cosh \left( \frac{p+s}{2} \ad_{\ln \generictensor} \right)}{\sinh \left( \frac{1}{2} \ad_{\ln \generictensor} \right)}
         \ad_{\ln \generictensor} [\tensorq{X}]
         ,
     \end{align}
   \end{subequations}
   provided that the expressions on either side of the identities are well-defined, in particular provided that the formal power series converge.
 \end{lemma}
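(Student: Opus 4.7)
The plan is to apply the identity~\eqref{eq:37} of Lemma~\ref{lm:2} twice and then combine the resulting expressions via elementary hyperbolic manipulations. First, apply~\eqref{eq:37} directly to $\generictensor^p\tensorq{X}\generictensor^{-s}$, which produces the factor $\exponential{s\,\ad_{\ln\generictensor}}$. Second, I observe that $\generictensor^{-s}\tensorq{X}\generictensor^p$ has exactly the structure $\generictensor^{p'}\tensorq{X}\generictensor^{-s'}$ with $p'=-s$ and $s'=-p$, and the constraint $p'-s'=p-s=1$ is preserved. Thus~\eqref{eq:37} applies again, now producing the factor $\exponential{-p\,\ad_{\ln\generictensor}}$.

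Linearity of the G\^ateaux derivative, together with Lemma~\ref{lm:2}, then gives, for either sign,
\begin{equation*}
\pd{\ln\generictensor}{\generictensor}\bigl[\generictensor^p \tensorq{X}\generictensor^{-s}\pm\generictensor^{-s}\tensorq{X}\generictensor^p\bigr]
=
\frac{\ad_{\ln\generictensor}}{1-\exponential{-\ad_{\ln\generictensor}}}
\bigl(\exponential{s\,\ad_{\ln\generictensor}}\pm\exponential{-p\,\ad_{\ln\generictensor}}\bigr)[\tensorq{X}].
\end{equation*}

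The remaining step is to rewrite the scalar-valued operator in brackets. Writing $\alpha=\ad_{\ln\generictensor}$ and using $p-s=1$, I would factor $\exponential{-\alpha/2}$ from both numerator and denominator, obtaining $1-\exponential{-\alpha}=2\exponential{-\alpha/2}\sinh(\alpha/2)$ and $\exponential{s\alpha}\mp\exponential{-p\alpha}=2\exponential{-\alpha/2}\,\mathrm{sinh/cosh}\bigl(\tfrac{p+s}{2}\alpha\bigr)$, since $s+\tfrac12=\tfrac{p+s}{2}$. After cancelling the common factor $\exponential{-\alpha/2}$ and using that $\ad_{\ln\generictensor}$ commutes with every formal power series in itself (cf.~\eqref{eq:49}), I obtain~\eqref{eq:58} and~\eqref{eq:59}.

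The only subtle point is the legitimacy of the algebraic cancellation inside the formal power series, but since all operators on display lie in the commutative subalgebra generated by $\ad_{\ln\generictensor}$ and the identities $\sinh x=\tfrac12(\exponential{x}-\exponential{-x})$, $\cosh x=\tfrac12(\exponential{x}+\exponential{-x})$ hold in every commutative ring of formal power series, no genuine obstacle arises; the convergence proviso already stated in Lemma~\ref{lm:2} is inherited here. Hence the main ``hard part'' is purely bookkeeping, namely recognizing that the second term admits the same representation~\eqref{eq:37} after reindexing.
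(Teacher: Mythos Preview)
Your proposal is correct and follows essentially the same route as the paper: both apply~\eqref{eq:37} from Lemma~\ref{lm:2} to each of the two summands (the paper does the reindexing $p'=-s$, $s'=-p$ implicitly in~\eqref{eq:60}), then rewrite the resulting combination of exponentials using $s+\tfrac12=\tfrac{p+s}{2}=p-\tfrac12$ and the hyperbolic identities. The only cosmetic slip is the sign symbol $\mp$ in your hyperbolic step, which should read $\pm$ to match the earlier display; otherwise the argument and the paper's coincide.
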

 \begin{proof}
   The identities follow from Lemma~\ref{lm:2}, formula~\eqref{eq:37}. Since $p-s = 1$, we see that $ \frac{p + s}{2} = s + \frac{1}{2} = p - \frac{1}{2}$, hence in virtue of~\eqref{eq:37} we have
   \begin{equation}
     \label{eq:60}
     \pd{\ln \generictensor}{\generictensor}
     \left[
       \generictensor^p \tensorq{X} \generictensor^{-s}
       -
       \generictensor^{-s} \tensorq{X} \generictensor^p
     \right]
     =
     \frac{
       1
     }
     {
       \frac
       {1-\exponential{-\ad_{\ln \generictensor}}
       }
       {
         \ad_{\ln \generictensor}}
     }
     \exponential{s \ad_{\ln \generictensor}}
     \left[
       \tensorq{X}
     \right]
     -
     \frac{
       1
     }
     {
       \frac
       {1-\exponential{-\ad_{\ln \generictensor}}
       }
       {
         \ad_{\ln \generictensor}}
     }
     \exponential{-p \ad_{\ln \generictensor}}
     \left[
       \tensorq{X}
     \right]
     =
     \frac
     {
     \exponential{ \left(s + \frac{1}{2}\right) \ad_{\ln \generictensor}}
     -
     \exponential{ -\left(p - \frac{1}{2}\right) \ad_{\ln \generictensor}}
     }
     {
       \exponential{\frac{\ad_{\ln \generictensor}}{2}}-\exponential{-\frac{\ad_{\ln \generictensor}}{2}}
     }
       \ad_{\ln \generictensor}
       \left[
         \tensorq{X}
       \right]
       ,
     \end{equation}
     which yields~\eqref{eq:58}. The second identity~\eqref{eq:59} follows by a similar manipulation.
 \end{proof}
 Now we are ready to investigate some open problems regarding identities involving the derivative of matrix logarithm. In particular, the relation between the expression
 \begin{equation*}
   \pd{\ln\generictensor}{\generictensor}\left[\generictensor \tensorq{X} + \tensorq{X} \generictensor \right]
 \end{equation*}
 and $2 \tensorq{X}$ is of interest in many works on continuum mechanics, see, for example, \cite{holthausen.s.ea:constitutive} and \cite{neff.p.holthausen.s.ea:hypo-elasticity}. A natural question is whether these two expressions are equal, that is what are the conditions for the validity of the equality
 \begin{equation}
  \label{eq:13}
  \pd{\ln\generictensor}{\generictensor}\left[\generictensor \tensorq{X} + \tensorq{X} \generictensor \right] = 2 \tensorq{X}.
\end{equation}
It turns out that this question is straightforward to answer using the just developed calculus. Indeed, with the help of~\eqref{eq:36} and~\eqref{eq:59} we can prove the following lemma.

\begin{lemma}[Formula for the derivative of matrix logarithm]
  \label{lm:3}
  Let $\generictensor \in \SymMat$ be a symmetric positive definite matrix and let $\tensorq{X} \in \Mat$ be a general matrix. Then
  \begin{equation}
    \label{eq:50}
    \pd{\ln\generictensor}{\generictensor}\left[\generictensor \tensorq{X} + \tensorq{X} \generictensor \right]
    =
    2 \tensorq{X}
  \end{equation}
  if and only if $\tensorq{X}$ commutes with $\generictensor$, provided that all expressions are well-defined, in particular provided that the formal power series converge.
  % 
  % Furthermore
  % \begin{equation}
  %   \label{eq:51}
  %   \pd{\ln\generictensor}{\generictensor}\left[\generictensor^{\frac{1}{2}} \tensorq{X} \generictensor^{\frac{1}{2}} \right]
  %   =
  %   \tensorq{X}
  % \end{equation}
  % if and only if $\tensorq{X}$ commutes with $\generictensor$.
\end{lemma}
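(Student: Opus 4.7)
The plan is to reduce the equality \eqref{eq:50} to a linear algebraic condition on the action of a concrete operator on $\Mat$, then diagonalize via the spectral theorem for the symmetric matrix $\generictensor$. The first step is to invoke identity \eqref{eq:36} of Lemma~\ref{lm:2} to rewrite
\begin{equation*}
  \pd{\ln\generictensor}{\generictensor}\left[\generictensor \tensorq{X} + \tensorq{X}\generictensor\right]
  =
  \coth\left(\tfrac{1}{2}\ad_{\ln\generictensor}\right) \ad_{\ln\generictensor}\left[\tensorq{X}\right]
  =:
  G\left(\ad_{\ln\generictensor}\right)\left[\tensorq{X}\right],
\end{equation*}
where $G(y) := y \coth(y/2)$, continuously extended by $G(0)=2$ so that $G$ is real-analytic on $\R$. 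Thus~\eqref{eq:50} is equivalent to $\left(G(\ad_{\ln\generictensor}) - 2\foidentity\right)[\tensorq{X}] = 0$.

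The second step is to diagonalize. Since $\generictensor \in \SymMat$ is positive definite, the spectral theorem supplies an orthonormal eigenbasis $\{u_i\}_{i=1}^d$ with positive eigenvalues $\{\lambda_i\}$. The rank-one matrices $E_{ij} := u_i \transpose{u_j}$ form a basis of $\Mat$, and one checks directly that $\ad_{\ln\generictensor}[E_{ij}] = (\ln\lambda_i-\ln\lambda_j) E_{ij}$. Consequently the operator $G(\ad_{\ln\generictensor}) - 2\foidentity$ is simultaneously diagonalized in this basis with real eigenvalues $G(\ln\lambda_i-\ln\lambda_j)-2$.

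The third step is a scalar monotonicity argument: $G(y) > 2$ for every real $y\neq 0$. I would establish this via the odd auxiliary function $h(y) := y\cosh(y/2) - 2\sinh(y/2)$, whose derivative $h'(y) = (y/2)\sinh(y/2)$ is nonnegative on $\R$ and vanishes only at $y=0$; dividing by $\sinh(y/2)$ (which has the sign of $y$) yields $G(y) \geq 2$ with equality only at the origin. Combining this with the diagonalization shows that $(G(\ad_{\ln\generictensor}) - 2\foidentity)[\tensorq{X}] = 0$ precisely when every coefficient $x_{ij}$ in the expansion $\tensorq{X} = \sum_{i,j} x_{ij} E_{ij}$ with $\lambda_i \neq \lambda_j$ vanishes, which is the standard characterisation of matrices commuting with $\generictensor$, equivalently with $\ln\generictensor$.

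The main obstacle is interpretational rather than computational: Lemma~\ref{lm:2} delivers $G(\ad_{\ln\generictensor})$ only as a formal power series in $\ad_{\ln\generictensor}$, whose scalar counterpart has finite radius of convergence $2\pi$ (nearest singularities at $\pm 2\pi \iunit$). The stated if-and-only-if therefore lives naturally within the commutator-based functional calculus: because $\generictensor$ is symmetric, $\ad_{\ln\generictensor}$ is diagonalizable with real eigenvalues, so $G(\ad_{\ln\generictensor})$ can be interpreted unambiguously as applying the scalar function $G$ eigenvalue by eigenvalue. This matches the explicit convergence proviso in the lemma, and the argument sketched above goes through whenever that proviso is in force.
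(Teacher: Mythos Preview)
Your argument is correct, but it follows a different route from the paper. Both proofs start from identity~\eqref{eq:36}, and the scalar observation you isolate---that $G(y)=y\coth(y/2)>2$ for $y\neq 0$---is equivalent to the paper's statement that the function $\gamma(x)=\bigl(x\coth(x/2)-2\bigr)/x^2$ is positive. The difference is how this scalar fact is lifted to the operator level. You invoke the spectral theorem for the symmetric matrix $\generictensor$, diagonalise $\ad_{\ln\generictensor}$ in the basis $E_{ij}=u_i\transpose{u_j}$, and read off the kernel of $G(\ad_{\ln\generictensor})-2\foidentity$ eigenvalue by eigenvalue. The paper deliberately avoids any explicit spectral decomposition: it factors $G(\ad_{\ln\generictensor})-2\foidentity=\gamma(\ad_{\ln\generictensor})\,\ad_{\ln\generictensor}^2$, uses the positivity of $\gamma$ to conclude that $\gamma(\ad_{\ln\generictensor})$ is a bijection, reduces to $\ad_{\ln\generictensor}^2[\tensorq{X}]=\tensorzero$, and then exploits the self-adjointness identity $\tensordot{(\ad_{\ln\generictensor}^2[\tensorq{X}])}{\tensorq{X}}=\absnorm{\ad_{\ln\generictensor}[\tensorq{X}]}^2$ to force $\ad_{\ln\generictensor}[\tensorq{X}]=\tensorzero$. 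Your approach is perhaps the more transparent one and makes the role of the convergence proviso explicit; the paper's approach is the one that showcases the commutator calculus it is advertising, and indeed the paper presents this lemma precisely to contrast its method with ``the traditional approach via spectral decompositions''---which is essentially what you have written down.
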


Formula~\eqref{eq:50} refines \cite[Proposition~A.31]{neff.p.holthausen.s.ea:hypo-elasticity}, and it finishes the calculation in \cite[Equation~A.153]{neff.p.holthausen.s.ea:hypo-elasticity}, hence it nicely demonstrates the effectiveness of the commutator calculus over the traditional approach via spectral decompositions.

\begin{proof}
Concerning the proof of formula~\eqref{eq:50} it is enough to employ equality~\eqref{eq:36} and use the identity
\begin{equation}
  \label{eq:52}
  \left(\coth \frac{x}{2} \right) x = 2 + \gamma(x) x^2,
\end{equation}
where $\gamma$ denotes the function
\begin{equation}
  \label{eq:53}
  \gamma (x)
  =_{\bydefinition}
  \begin{cases}
    \frac{\left(\coth \frac{x}{2} \right) x - 2}{x^2}, & x \not = 0,\\
    \frac{1}{6}, & x = 0.
  \end{cases}
\end{equation}
We note that $\gamma$ is a continuous and \emph{positive} function. Using the just introduced notation, we can rewrite~\eqref{eq:36} as
\begin{equation}
  \label{eq:54}
  \pd{\ln \generictensor}{\generictensor} \left[ \generictensor \tensorq{X} + \tensorq{X} \generictensor \right]
  =
  \left(
     2 + \gamma \left( \ad_{\ln \generictensor} \right) \ad_{\ln \generictensor}^2
  \right)
  \left[
    \tensorq{X}
  \right]
  .
\end{equation}
Thus we see that~\eqref{eq:50} holds if and only if
\begin{equation}
  \label{eq:65}
  \gamma \left( \ad_{\ln \generictensor} \right) \ad_{\ln \generictensor}^2 \left[ \tensorq{X}\right] = \tensorzero. 
\end{equation}
This equation is easy to solve in the proposed formalism. First we note that in virtue or the \emph{positivity} of function~$\gamma(x)$ we can find the inverse to the linear operator
\begin{equation}
  \label{eq:64}
  \tensorq{U} \in \Mat \mapsto \gamma (\ad_{\tensorq{Y}})[\tensorq{U}] \in \Mat
\end{equation}
as $\frac{1}{\gamma} \left( \ad_{\tensorq{Y}} \right)$. Consequently the linear operator $\gamma (\ad_{\tensorq{Y}})$ is a bijection, hence the only matrix $\tensorq{U}$ that is mapped to the zero matrix~$\tensorzero$ is the zero matrix. Thus in order to solve~\eqref{eq:65} we actually need to solve
\begin{equation}
  \label{eq:11}
  \ad_{\ln \generictensor}^2 \left[ \tensorq{X}\right] = \tensorzero.
\end{equation}
A simple algebraic manipulation reveals that
\begin{equation}
  \label{eq:55}
  \tensordot{\left( \ad_{\ln \generictensor}^2 \left[ \tensorq{X}\right] \right)}{\tensorq{X}}
  =
  \tensordot{\left( \ad_{\ln \generictensor} \left[ \tensorq{X}\right] \right)}{ \left( \ad_{\ln \generictensor} \left[ \tensorq{X} \right] \right)}
  ,
\end{equation}
where the dot product on the space of matrices is the standard one, that is $\tensordot{\tensorq{U}}{\tensorq{V}} = _{\bydefinition} \Tr \left( \tensorq{U} \transpose{\tensorq{V}} \right)$, and where the symbol $\absnorm{\cdot}$ denotes the norm induced by this dot product, $\absnorm{\tensorq{U}} =_{\bydefinition} \left(\tensordot{\tensorq{U}}{\tensorq{V}}\right)^{\frac{1}{2}}$. (Note that here we are exploiting the symmetry of $\generictensor$.) The equality~\eqref{eq:55} with the equation~\eqref{eq:11} then imply that
\begin{equation}
  \label{eq:12}
  \absnorm{\ad_{\ln \generictensor} \left[ \tensorq{X}\right]}^2 = 0,
\end{equation}
which means that $\ln \generictensor$ and hence also $\generictensor$ must commute with $\tensorq{X}$.  

\end{proof}

\subsection{Monotonicity of stress--strain relations}
\label{sec:monot-stress-stra}
Yet another problem easy to solve using the proposed calculus is related to the analysis of \emph{corotational stability postulate}, see~\cite{holthausen.s.ea:constitutive}. The core problem is the following. For the given \emph{isotropic tensorial function} $\tensorf{f}$ that maps \emph{symmetric} tensors/matrices to \emph{symmetric} tensors/matrices one wants to show the equivalence between the characterisation
\begin{equation}
  \label{eq:69}
  \forall \generictensor \in {\SymMat}_+, \forall \tensorq{X} \in \SymMat \setminus \left\{ \tensorzero \right\}\colon \qquad
  \tensordot{
    \left(
      \pd{\tensorf{f} \left( \ln \generictensor \right)}{\generictensor}
      \left[
        \generictensor \tensorq{X}
        +
        \tensorq{X} \generictensor
      \right]
    \right)
  }
  {
    \tensorq{X}
  }
  >
  0
  ,
\end{equation}
and the characterisation
\begin{equation}
  \label{eq:70}
  \forall \tensorq{G} \in \SymMat, \forall \tensorq{X} \in \SymMat \setminus \left\{ \tensorzero \right\}\colon \qquad
  \tensordot{
    \left(
      \pd{\tensorf{f}(\tensorq{G})}{\tensorq{G}}
      \left[
        \tensorq{X}
      \right]
    \right)
  }
  {
    \tensorq{X}
  }
  >
  0
  ,
\end{equation}
see~\cite[Equation 1.11]{neff.p.holthausen.s.ea:hypo-elasticity}.

Before we start with the proof equivalence between~\eqref{eq:69} and~\eqref{eq:70}, we recall that if $\tensorf{f}$ is an isotropic tensorial function, then the representation theorem for isotropic tensorial functions, see, for example, \cite[Section 12]{truesdell.c.noll.w:non-linear*1}, implies that
\begin{equation}
  \label{eq:63}
  \tensorf{f}(\generictensor) \generictensor = \generictensor \tensorf{f}(\generictensor).
\end{equation}
This property allows us to show the following lemma.
\begin{lemma}[Commutativity of $\pd{\tensorf{f}(\generictensor)}{\generictensor}$ and $\ad_{\generictensor}$ for isotropic tensorial functions]
  \label{lm:5}
  Let $\tensorf{f} : \SymMat \mapsto \SymMat$ be a differentiable isotropic tensorial function, $\generictensor \in \SymMat$ be a symmetric matrix, and let $\tensorq{Y} \in \Mat$ be an arbitrary matrix. Then
  \begin{equation}
    \label{eq:81}
    \ad_{\generictensor}
    \left[
      \pd{\tensorf{f} \left(\generictensor \right)}{\generictensor} \left[ \tensorq{Y} \right]
    \right]
    =
    \pd{\tensorf{f}(\generictensor)}{\generictensor} \left[ \ad_{\generictensor} \left[ \tensorq{Y} \right]\right]
    .
  \end{equation}
  
\end{lemma}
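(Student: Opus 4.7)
The plan is to combine two ingredients: the isotropy relation \eqref{eq:63}, which states $\ad_{\generictensor}[\tensorf{f}(\generictensor)] = \tensorzero$, and the commutator--differentiation identity \eqref{eq:18}. Differentiating the isotropy relation in a direction $\tensorq{Y}$ will produce on one side of the resulting equation the expression $\ad_{\generictensor}\!\left[\pd{\tensorf{f}(\generictensor)}{\generictensor}[\tensorq{Y}]\right]$ appearing on the left of \eqref{eq:81}, and on the other side the quantity $\ad_{\tensorf{f}(\generictensor)}[\tensorq{Y}]$; identity \eqref{eq:18} then converts exactly this quantity into the right-hand side of \eqref{eq:81}.

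Concretely, the first step is to take the G\^ateaux derivative of $\generictensor \tensorf{f}(\generictensor) - \tensorf{f}(\generictensor) \generictensor = \tensorzero$ at $\generictensor$ in the direction $\tensorq{Y}$, applying the product rule, to obtain
\[
\tensorq{Y}\, \tensorf{f}(\generictensor) + \generictensor \pd{\tensorf{f}(\generictensor)}{\generictensor}\!\left[\tensorq{Y}\right] - \pd{\tensorf{f}(\generictensor)}{\generictensor}\!\left[\tensorq{Y}\right] \generictensor - \tensorf{f}(\generictensor)\, \tensorq{Y} = \tensorzero,
\]
which rearranges to $\ad_{\generictensor}\!\left[\pd{\tensorf{f}(\generictensor)}{\generictensor}[\tensorq{Y}]\right] = \ad_{\tensorf{f}(\generictensor)}[\tensorq{Y}]$. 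The second step is to invoke \eqref{eq:18}, with $\tensorq{Y}$ playing the role of $\tensorq{X}$, to rewrite the right-hand side as $\pd{\tensorf{f}(\generictensor)}{\generictensor}[\ad_{\generictensor}[\tensorq{Y}]]$. Chaining the two equalities then yields \eqref{eq:81}.

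The main obstacle is that identity \eqref{eq:18} was derived in the present work for matrix functions defined via convergent power series, whereas Lemma~\ref{lm:5} only assumes that $\tensorf{f}$ is a differentiable isotropic tensorial function on $\SymMat$. Extending \eqref{eq:18} to this larger class is precisely the type of statement that the commutator based functional calculus announced in Section~\ref{sec:conclusion} and developed in the companion work is designed to handle; alternatively, since $\generictensor \in \SymMat$ is diagonalisable and an isotropic $\tensorf{f}$ acts componentwise on the eigenvalues, one can approximate $\tensorf{f}$ on a neighbourhood of the spectrum of $\generictensor$ by polynomials, apply \eqref{eq:18} to each polynomial, and pass to the limit in the resulting identity. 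A secondary, milder issue is making sense of the G\^ateaux derivative in directions $\tensorq{Y} \in \Mat$ that need not be symmetric when $\tensorf{f}$ is defined only on $\SymMat$; by linearity in $\tensorq{Y}$, it is enough to verify \eqref{eq:81} separately for symmetric and skew-symmetric directions, the symmetric case being the one directly covered by differentiability of $\tensorf{f}$ on $\SymMat$.
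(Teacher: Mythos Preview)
Your proposal is correct and follows essentially the same route as the paper: differentiate the commutation identity~\eqref{eq:63} in the direction~$\tensorq{Y}$ to obtain $\ad_{\generictensor}\!\bigl[\pd{\tensorf{f}(\generictensor)}{\generictensor}[\tensorq{Y}]\bigr] = \ad_{\tensorf{f}(\generictensor)}[\tensorq{Y}]$, then invoke~\eqref{eq:18} to rewrite the right-hand side as $\pd{\tensorf{f}(\generictensor)}{\generictensor}\bigl[\ad_{\generictensor}[\tensorq{Y}]\bigr]$. The paper proceeds identically (see equations~\eqref{eq:66}--\eqref{eq:78}); your additional remarks on the scope of~\eqref{eq:18} beyond power series and on non-symmetric directions~$\tensorq{Y}$ are reasonable caveats that the paper itself leaves implicit, deferring them to the rigorous companion work.
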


\begin{proof}
  Differentiating the property~\eqref{eq:63}, we see that
  \begin{equation}
    \label{eq:66}
    \pd{\tensorf{f} \left(\generictensor \right)}{\generictensor} \left[ \tensorq{Y} \right] \generictensor
    +
    \tensorf{f} \left(\generictensor \right)
    \tensorq{Y}
    =
    \tensorq{Y}
    \tensorf{f} \left(\generictensor \right)
    +
    \generictensor
    \pd{\tensorf{f} \left(\generictensor \right)}{\generictensor} \left[ \tensorq{Y} \right], 
  \end{equation}
  which means that
  \begin{equation}
    \label{eq:74}
    \ad_{\tensorf{f} \left( \generictensor \right)}
    \left[
      \tensorq{Y}
    \right]
    =
    \ad_{\generictensor}
    \left[
      \pd{\tensorf{f} \left(\generictensor \right)}{\generictensor} \left[ \tensorq{Y} \right]
    \right]
    .
  \end{equation}
  On the other hand we have the identity~\eqref{eq:18}, that is
  \begin{equation}
    \label{eq:78}
    \ad_{\tensorf{f}(\generictensor)} [\tensorq{Y}]
    =
    \pd{\tensorf{f}(\generictensor)}{\generictensor} \left[ \ad_{\generictensor} \left[ \tensorq{Y} \right]\right].
  \end{equation}
  Combining~\eqref{eq:74} and~\eqref{eq:78} we see that for isotropic tensorial functions we have the identity~\eqref{eq:81}.
\end{proof}

For the symmetric matrices we can also provide a straightforward characterisation of the transpose.

\begin{lemma}[Transpose of $f(\ad_\generictensor)$]
  \label{lm:6}
  Let $\generictensor \in \SymMat$ be a symmetric matrix, $\tensorq{X} \in \Mat$ be an arbitrary matrix, and let $f$ be a function whose formal power series defines the operator $f (\ad_\generictensor)$. Then
  \begin{equation}
    \label{eq:80}
    \transpose{\left( f (\ad_\generictensor) \left[ \tensorq{X} \right] \right)} = f(-\ad_{\generictensor}) \left[ \transpose{\tensorq{X}} \right]. 
  \end{equation}
  Furthermore, let $\tensorq{Y} \in \Mat$ be an arbitrary matrix, then
  \begin{equation}
    \label{eq:85}
    \tensordot{\left( f (\ad_\generictensor) \left[ \tensorq{X} \right] \right)}{\tensorq{Y}}
    =
    \tensordot{\tensorq{X}}{\left( f (\ad_\generictensor) \left[ \tensorq{Y} \right] \right)}
    .
  \end{equation}
\end{lemma}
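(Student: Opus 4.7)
The plan is to reduce both identities to the case $f(x) = x^n$ via the defining formal power series, and then establish the $n=1$ base case by direct computation using the symmetry of $\generictensor$. For the first identity, the base case is $\transpose{(\ad_\generictensor [\tensorq{X}])} = \transpose{(\generictensor \tensorq{X} - \tensorq{X} \generictensor)} = \transpose{\tensorq{X}}\generictensor - \generictensor\transpose{\tensorq{X}} = -\ad_\generictensor [\transpose{\tensorq{X}}]$, where the symmetry $\transpose{\generictensor} = \generictensor$ is used in an essential way. Iterating this relation yields $\transpose{\left(\ad_\generictensor^n [\tensorq{X}]\right)} = (-1)^n \ad_\generictensor^n [\transpose{\tensorq{X}}] = (-\ad_\generictensor)^n[\transpose{\tensorq{X}}]$ for every $n \in \N$, so if $f(x) = \sum_{n=0}^{+\infty} f_n x^n$ is the formal power series defining $f(\ad_\generictensor)$, term-by-term transposition gives
\begin{equation*}
  \transpose{\left( f (\ad_\generictensor) [\tensorq{X}] \right)}
  =
  \sum_{n=0}^{+\infty} f_n \transpose{\left( \ad_\generictensor^n [\tensorq{X}] \right)}
  =
  \sum_{n=0}^{+\infty} f_n (-\ad_\generictensor)^n [\transpose{\tensorq{X}}]
  =
  f(-\ad_\generictensor)[\transpose{\tensorq{X}}],
\end{equation*}
which is~\eqref{eq:80}. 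This reduction is legitimate provided that both formal series converge in the sense stipulated in the statement.

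For the second identity, the plan is to show that $\ad_\generictensor$ is \emph{self-adjoint} with respect to the Frobenius inner product $\tensordot{\tensorq{U}}{\tensorq{V}} = \Tr(\tensorq{U} \transpose{\tensorq{V}})$ whenever $\generictensor$ is symmetric. Using the elementary identities $\tensordot{\generictensor \tensorq{X}}{\tensorq{Y}} = \tensordot{\tensorq{X}}{\transpose{\generictensor}\tensorq{Y}}$ and $\tensordot{\tensorq{X}\generictensor}{\tensorq{Y}} = \tensordot{\tensorq{X}}{\tensorq{Y}\transpose{\generictensor}}$ together with $\transpose{\generictensor} = \generictensor$, a direct calculation yields
\begin{equation*}
  \tensordot{\left( \ad_\generictensor [\tensorq{X}] \right)}{\tensorq{Y}}
  =
  \tensordot{\tensorq{X}}{\generictensor \tensorq{Y}} - \tensordot{\tensorq{X}}{\tensorq{Y} \generictensor}
  =
  \tensordot{\tensorq{X}}{\left( \ad_\generictensor [\tensorq{Y}] \right)}.
\end{equation*}
Since the composition of self-adjoint operators remains self-adjoint, $\ad_\generictensor^n$ is self-adjoint for every $n \in \N$, and again summing the formal power series coefficient by coefficient produces the self-adjointness of $f(\ad_\generictensor)$, which is precisely~\eqref{eq:85}.

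I do not anticipate a serious obstacle here: both parts hinge on the $n=1$ identity, which in turn is only a two-line manipulation exploiting $\transpose{\generictensor} = \generictensor$. The only subtlety is the passage from the finite-order identities to the power series expressions, which must be justified under the standing assumption that the formal series for $f(\ad_\generictensor)$ converges; once this is granted, termwise transposition and termwise pairing with $\tensorq{Y}$ are legitimate by continuity of the transpose and of the Frobenius inner product on $\Mat$.
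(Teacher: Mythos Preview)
Your proposal is correct and follows essentially the same route as the paper: establish the $n=1$ case directly from $\transpose{\generictensor}=\generictensor$, iterate to obtain the statement for $\ad_{\generictensor}^n$, and then pass to the formal power series term by term. One small phrasing issue: the claim ``the composition of self-adjoint operators remains self-adjoint'' is false in general (it requires the operators to commute); here you are only using that \emph{powers} of a single self-adjoint operator are self-adjoint, which is of course fine, but you should state it that way.
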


\begin{proof}
  We see that
  \begin{equation}
    \label{eq:83}
    \transpose{\left( \ad_\generictensor \left[ \tensorq{X}\right] \right)}
    =
    \transpose{\left( \generictensor \tensorq{X} - \tensorq{X} \generictensor \right)}
    =
    \transpose{\tensorq{X}} \generictensor - \generictensor \transpose{\tensorq{X}}
    =
    -
    \ad_\generictensor \left[ \transpose{\tensorq{X}} \right]
    ,
  \end{equation}
  hence for $n \in \N$ we have
  \begin{equation}
    \label{eq:84}
    \transpose{\left( \ad_\generictensor^n \left[ \tensorq{X}\right] \right)}
    =
    \transpose{
      \ad_{\generictensor}
      \left[
        \ad_{\generictensor}
        \left[
          \dots
          \left.
            \ad_{\generictensor}
            \left[
              \tensorq{X}
            \right]
          \right]
          \dots
        \right]
      \right]
    }
    =
    -
    \ad_{\generictensor}
    \left[
      -\ad_{\generictensor}
      \left[
        \dots
        \left.
          -\ad_{\generictensor}
          \left[
            \transpose{\tensorq{X}}
          \right]
        \right]
        \dots
      \right]
    \right]
    =
    \left(
      -
      \ad_\generictensor
    \right)^n
    \left[ \transpose{\tensorq{X}} \right]
    .
  \end{equation}
  Consequently, if~$f$ is a function with a given power series expansion $f(x) = \sum_{n=0}^{+\infty} f_nx^n$, then 
  \begin{multline}
    \label{eq:82}
    \transpose{\left( f (\ad_\generictensor) \left[ \tensorq{X} \right] \right)}
    =
    \transpose{
      \left(
        f_1 \ad_\generictensor \left[ \tensorq{X}\right]
        +
        f_2 \ad_{\generictensor} \left[  \ad_\generictensor \left[ \tensorq{X}\right] \right]
        +
        \cdots
      \right)
    }
    \\
    =
    f_1 \left(-\ad_\generictensor\right) \left[ \transpose{\tensorq{X}} \right]
    +
    f_2 \left(- \ad_{\generictensor} \right) \left[  \left(-\ad_\generictensor\right) \left[ \transpose{\tensorq{X}}\right] \right]
    +
    \cdots
    =
    f (-\ad_\generictensor) \left[ \transpose{\tensorq{X}} \right],
  \end{multline}
  which gives us the first proposition~\eqref{eq:80}. Concerning the second proposition~\eqref{eq:85} we first observe that 
  \begin{equation}
    \label{eq:87}
    \tensordot{\left(\ad_\generictensor \left[ \tensorq{X} \right] \right)}{\tensorq{Y}}
    =
    \tensordot{\tensorq{X}}{\left(\ad_\generictensor \left[ \tensorq{Y}  \right] \right)}
    .
  \end{equation}
  Indeed, the definition of the dot product and the cyclic property of the trace together with the symmetry of $\generictensor$ yield
  \begin{equation}
    \label{eq:98}
    \tensordot{\left(\ad_\generictensor \left[ \tensorq{X} \right] \right)}{\tensorq{Y}}
    =
    \Tr
    \left(
      \left(
        \ad_\generictensor \left[ \tensorq{X} \right]
      \right)
      \transpose{\tensorq{Y}}
    \right)
    =
    \Tr
    \left(
      \left(
        \generictensor \tensorq{X}
        -
        \tensorq{X} \generictensor
      \right)
      \transpose{\tensorq{Y}}
    \right)
    =
   \Tr
    \left(
      \left(
        \tensorq{X} \transpose{\tensorq{Y}} \generictensor
        -
        \tensorq{X} \generictensor \transpose{\tensorq{Y}}
      \right)
    \right)
    =
    \Tr
    \left(
      \tensorq{X}
      \transpose{
        \left(
          \generictensor \tensorq{Y} 
          -
          \tensorq{Y} \generictensor
        \right)
      }
    \right)
    =
    \tensordot{\tensorq{X}}{\left(\ad_\generictensor \left[ \tensorq{Y}  \right] \right)}
    .
  \end{equation}
  By induction we then get
  $
    \tensordot{\left(\ad_\generictensor^n \left[ \tensorq{X} \right] \right)}{\tensorq{Y}}
    =
    \tensordot{\tensorq{X}}{\left(\ad_\generictensor^n \left[ \tensorq{Y} \right] \right)}
$
  for arbitrary $n \in \N$, and the proposition~\eqref{eq:85} is a consequence of the power series representation $f(\ad_{\generictensor}) = \sum_{n=0}^{+\infty} f_n \ad_{\generictensor}^n$.  
\end{proof}

\begin{lemma}[Equivalent characterisation of isotropic tensorial functions functions]
  Let $\tensorf{f} : \SymMat \mapsto \SymMat$ be a differentiable isotropic tensorial function. Let $p, s \in \Z$ be arbitrary integers such that  $p-s=1$. Then
  \begin{equation}
    \label{eq:72}
    \forall \generictensor \in {\SymMat}_+, \forall \tensorq{X} \in \SymMat \setminus \left\{ \tensorzero \right\}\colon \qquad
    \tensordot{
      \left(
        \pd{\tensorf{f} \left( \ln \generictensor \right)}{\generictensor}
        \left[
          \generictensor^p \tensorq{X} \generictensor^{-s}
          +
          \generictensor^{-s} \tensorq{X} \generictensor^p
        \right]
      \right)
    }
    {
      \tensorq{X}
    }
    >
    0
  \end{equation}
  holds if and only if
  \begin{equation}
    \label{eq:73}
    \forall \tensorq{G} \in \SymMat, \forall \tensorq{X} \in \SymMat \setminus \left\{ \tensorzero \right\}\colon \qquad
      \tensordot{
    \left(
      \pd{\tensorf{f}(\tensorq{G})}{\tensorq{G}}
      \left[
        \tensorq{X}
      \right]
    \right)
  }
  {
    \tensorq{X}
  }
  >
  0
  .
  \end{equation}
  In particular, if we set $p=1$ and $s=0$ we have the equivalence between~\eqref{eq:69} and~\eqref{eq:70}.
\end{lemma}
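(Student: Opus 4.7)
The plan is to reduce the bilinear form in~\eqref{eq:72} to the one in~\eqref{eq:73} by an explicit, pointwise change of variable on $\SymMat$. Fix $\generictensor\in\SymMat_+$ and set $\tensorq{G}:=\ln\generictensor\in\SymMat$. By the chain rule,
\begin{equation*}
\pd{\tensorf{f}(\ln\generictensor)}{\generictensor}\!\left[\tensorq{Y}\right]
=A\!\left[\pd{\ln\generictensor}{\generictensor}\!\left[\tensorq{Y}\right]\right],
\qquad
A:=\left.\pd{\tensorf{f}(\tensorq{G})}{\tensorq{G}}\right|_{\tensorq{G}=\ln\generictensor}.
\end{equation*}
Taking $\tensorq{Y}=\generictensor^p\tensorq{X}\generictensor^{-s}+\generictensor^{-s}\tensorq{X}\generictensor^p$ and invoking identity~\eqref{eq:59} of Lemma~\ref{lm:4}, this collapses the left-hand side of~\eqref{eq:72} to $\tensordot{\bigl(A[K[\tensorq{X}]]\bigr)}{\tensorq{X}}$ with
\begin{equation*}
K:=h\!\left(\ad_{\ln\generictensor}\right),
\qquad
h(y):=\frac{y\cosh\!\left(\tfrac{p+s}{2}y\right)}{\sinh(y/2)}.
\end{equation*}

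Next I verify that $K$ is a self-adjoint, positive definite bijection on $\SymMat$. The function $h$ is even (immediate from $\cosh(-x)=\cosh x$, $\sinh(-x)=-\sinh x$), with $h(0)=2$ and $h>0$ on $\R$. Because $h$ is even, only even powers of $\ad_{\ln\generictensor}$ enter its defining series, and since $\ad_\tensorq{G}$ swaps $\SymMat$ with the space of antisymmetric matrices whenever $\tensorq{G}\in\SymMat$, even powers preserve $\SymMat$; hence $K$ restricts to a linear operator on $\SymMat$. Self-adjointness with respect to the Frobenius product follows from Lemma~\ref{lm:6}, identity~\eqref{eq:85}, applied to $f=h$. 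Positive definiteness follows from the spectral theorem: $\ad_{\ln\generictensor}$ is self-adjoint with real spectrum $\{\mu_i-\mu_j\}$ (the $\mu_i$ being the eigenvalues of $\ln\generictensor$), so $K$ has spectrum $\{h(\mu_i-\mu_j)\}\subset(0,\infty)$. Consequently $K$ admits a self-adjoint, positive definite square root $K^{1/2}$ on $\SymMat$, again expressible as the even function $\sqrt{h}$ of $\ad_{\ln\generictensor}$.

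To conclude, I use the isotropy hypothesis through Lemma~\ref{lm:5}, which gives that $A$ commutes with $\ad_{\ln\generictensor}$ and hence with every power series in $\ad_{\ln\generictensor}$, in particular with $K^{1/2}$. Combining this with the self-adjointness of $K^{1/2}$ yields
\begin{equation*}
\tensordot{\bigl(A[K[\tensorq{X}]]\bigr)}{\tensorq{X}}
=\tensordot{\bigl(A[K^{1/2}[\tensorq{X}]]\bigr)}{K^{1/2}[\tensorq{X}]},
\end{equation*}
so that the quadratic form of~\eqref{eq:72} at $(\generictensor,\tensorq{X})$ equals the quadratic form of~\eqref{eq:73} at $(\tensorq{G},\tensorq{Y})=\bigl(\ln\generictensor,\,K^{1/2}[\tensorq{X}]\bigr)$. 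Since $K^{1/2}$ is a linear bijection on $\SymMat$ and $\exp:\SymMat\to\SymMat_+$ is a bijection inverse to $\ln$, quantifying over all admissible $(\generictensor,\tensorq{X})$ on one side is equivalent to quantifying over all admissible $(\tensorq{G},\tensorq{Y})$ on the other, which proves the stated equivalence. Specializing to $p=1$, $s=0$ recovers the equivalence of~\eqref{eq:69} and~\eqref{eq:70}.

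The main obstacle is not the algebra but the usual one of interpreting the power-series definitions of $K$, $\sqrt{K}$, and $K^{-1/2}$ beyond their nominal radii of convergence; once this is granted via the commutator-based functional calculus of Section~\ref{sec:repr-form-logar}, the remaining step is a textbook congruence argument by an invertible self-adjoint operator.
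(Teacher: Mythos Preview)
Your proof is correct and follows essentially the same route as the paper: chain rule plus Lemma~\ref{lm:4} to produce the operator $h(\ad_{\ln\generictensor})$ (the paper calls this $r_{p+s}$), then the square-root/congruence trick using Lemma~\ref{lm:5} to commute $A$ past $K^{1/2}$ and Lemma~\ref{lm:6} to shift $K^{1/2}$ across the Frobenius product. Your explicit remark that evenness of $h$ forces $K$ and $K^{1/2}$ to preserve $\SymMat$ is a point the paper leaves implicit, so in that respect your write-up is slightly more careful.
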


\begin{proof}
  We manipulate~\eqref{eq:72} using the chain rule and the identity~\eqref{eq:59} from Lemma~\ref{lm:4}, and we get
  \begin{equation}
    \label{eq:71}
    \tensordot{
      \left(
        \pd{\tensorf{f} \left( \ln \generictensor \right)}{\generictensor}
        \left[
          \generictensor^p \tensorq{X} \generictensor^{-s}
          +
          \generictensor^{-s} \tensorq{X} \generictensor^p
        \right]
      \right)
    }
    {
      \tensorq{X}
    }
    =
    \tensordot{
      \left(
        \left. \pd{\tensorf{f} \left( \tensorq{G} \right)}{\tensorq{G}} \right|_{\tensorq{G} = \ln \generictensor}
        \left[
          \pd{\ln \generictensor}{\generictensor}
          \left[
            \generictensor^p \tensorq{X} \generictensor^{-s}
            +
            \generictensor^{-s} \tensorq{X} \generictensor^p
          \right]
        \right]
      \right)
    }
    {
      \tensorq{X}
    }
    =
    \tensordot{
      \left(
        \left. \pd{\tensorf{f} \left( \tensorq{G} \right)}{\tensorq{G}} \right|_{\tensorq{G} = \ln \generictensor}
        \left[
          r_{p+s} \left( \ad_{\ln \generictensor}\right)
          \left[
            \tensorq{X}
          \right]
        \right]
        \right)
    }
    {
      \tensorq{X}
    }
    ,
  \end{equation}
  where we have denoted
  \begin{equation}
    \label{eq:14}
    r_q(x)
    =_{\bydefinition}
    \frac{\cosh \left( \frac{q}{2} x \right)}{\sinh \left( \frac{1}{2} x \right)} x.
  \end{equation}
  The function $r_q(x)$ is an even function, $r_q(x) = r_q(-x)$, and it is positive for any parameter value $q \in \R$. Thanks to the positivity of $r_q$ the operator $\sqrt{r_{p+s} \left( \ad_{\ln \generictensor} \right)}$ is well-defined, and, furthermore, the operator is invertible, and thus it is a bijection. (See the same discussion following the equation~\eqref{eq:65}.) Interpreting the operator $\sqrt{r_{p+s} \left( \ad_{\ln \generictensor} \right)}$ via the corresponding formal power series, and exploiting Lemma~\ref{lm:5} on commutativity of derivative and the commutator operator, we see that~\eqref{eq:71} can be further manipulated as
  \begin{multline}
    \label{eq:48}
    \tensordot{
      \left(
        \left. \pd{\tensorf{f} \left( \tensorq{G} \right)}{\tensorq{G}} \right|_{\tensorq{G} = \ln \generictensor}
        \left[
          r_{p+s} \left( \ad_{\ln \generictensor}\right)
          \left[
            \tensorq{X}
          \right]
        \right]
        \right)
    }
    {
      \tensorq{X}
    }
    =
    \tensordot{
      \left(
        \left. \pd{\tensorf{f} \left( \tensorq{G} \right)}{\tensorq{G}} \right|_{\tensorq{G} = \ln \generictensor}
        \left[
          \sqrt{r_{p+s} \left( \ad_{\ln \generictensor}\right)}
          \left[
            \sqrt{r_{p+s} \left( \ad_{\ln \generictensor}\right)}
            \left[
              \tensorq{X}
            \right]
          \right]
        \right]
      \right)
    }
    {
      \tensorq{X}
    }
    \\
    =
    \tensordot{
      \left(
        \sqrt{r_{p+s} \left( \ad_{\ln \generictensor}\right)}
        \left[
          \left. \pd{\tensorf{f} \left( \tensorq{G} \right)}{\tensorq{G}} \right|_{\tensorq{G} = \ln \generictensor}
          \left[
            \sqrt{r_{p+s} \left( \ad_{\ln \generictensor}\right)}
            \left[
              \tensorq{X}
            \right]
          \right]
        \right]
      \right)
    }
    {
      \tensorq{X}
    }
    \\
    =
    \tensordot{
      \left(
        \left. \pd{\tensorf{f} \left( \tensorq{G} \right)}{\tensorq{G}} \right|_{\tensorq{G} = \ln \generictensor}
        \left[
            \sqrt{r_{p+s} \left( \ad_{\ln \generictensor}\right)}
            \left[
              \tensorq{X}
            \right]
        \right]
      \right)
    }
    {
      \left(
        \sqrt{r_{p+s} \left( \ad_{\ln \generictensor}\right)}
        \left[
          \tensorq{X}
        \right]
      \right)
    }
    =
    \tensordot{
      \left(
        \pd{\tensorf{f} \left( \tensorq{G} \right)}{\tensorq{G}} 
        \left[
            \sqrt{r_{p+s} \left( \ad_{ \tensorq{G}}\right)}
            \left[
              \tensorq{X}
            \right]
        \right]
      \right)
    }
    {
      \left(
        \sqrt{r_{p+s} \left( \ad_{\tensorq{G}}\right)}
        \left[
          \tensorq{X}
        \right]
      \right)
    }
    ,
  \end{multline}
  where we have used the characterisation of transpose operator, see~\eqref{eq:85} in Lemma~\eqref{lm:6}. Consequently, we have
  \begin{equation}
    \label{eq:79}
    \tensordot{
      \left(
        \pd{\tensorf{f} \left( \ln \generictensor \right)}{\generictensor}
        \left[
          \generictensor^p \tensorq{X} \generictensor^{-s}
          +
          \generictensor^{-s} \tensorq{X} \generictensor^p
        \right]
      \right)
    }
    {
      \tensorq{X}
    }
    =
    \tensordot{
      \left(
        \pd{\tensorf{f} \left( \tensorq{G} \right)}{\tensorq{G}} 
        \left[
            \sqrt{r_{p+s} \left( \ad_{ \tensorq{G}}\right)}
            \left[
              \tensorq{X}
            \right]
        \right]
      \right)
    }
    {
      \left(
        \sqrt{r_{p+s} \left( \ad_{\tensorq{G}}\right)}
        \left[
          \tensorq{X}
        \right]
      \right)
    }
    .
  \end{equation}
  Since $\sqrt{r_{p+s} \left( \ad_{\tensorq{G}}\right)}$ is a bijection, we see that the equality~\eqref{eq:79} implies that~\eqref{eq:72} is equivalent to~\eqref{eq:73}.
\end{proof}

\section{Conclusion}
\label{sec:conclusion}

We have shown that the calculus based on Lemma~\ref{lm:1}--Lemma~\ref{lm:6} can be extremely effective in solving problems in tensor/matrix analysis. However, so far all proofs exploited the power series based functional calculus. In this setting the matrix exponential, the matrix logarithm and other functions of matrices are defined by the corresponding power series, that is
\begin{subequations}
  \label{eq:90}
  \begin{align}
    \label{eq:91}
    \exponential{\generictensor} &=_{\bydefinition} \sum_{n=0}^{+\infty} \frac{\generictensor^n}{n!}, \\
    \label{eq:92}
    \ln \generictensor &=_{\bydefinition} \sum_{n=1}^{+\infty} (-1)^{n+1} \frac{\left( \generictensor - \identity \right)^n}{n}, \\
    \label{eq:94}
    \tensorf{f} \left(\generictensor \right) &=_{\bydefinition} \sum_{n=0}^{+\infty} f_n \generictensor^n,
  \end{align}
\end{subequations}
whenever the real valued function $f$ has the power series expansion $f(x) = \sum_{n=0}^{+\infty} f_nx^n$. These tensor/matrix valued functions are well-defined for \emph{any} tensor/matrix $\generictensor \in \Mat$ provided that the corresponding power series converge. Furthermore, the formulae for the derivative of tensor/matrix valued functions of tensors/matrices are obtained by the differentiation of the corresponding power series, and they have, in many cases, \emph{nice representation} in terms of the $\ad_{\generictensor}$ operator, see, for example~\eqref{eq:28}. However, the requirement on convergence of the power series brings severe restrictions regarding the applicability of the calculus. For example, while the matrix exponential formula~\eqref{eq:91} is applicable for any matrix, the matrix logarithm formula~\eqref{eq:92} is applicable only for matrices $\generictensor$ not to far from the identity matrix, that is if $\absnorm{\generictensor  - \identity} < 1$. The question is whether we can preserve the simplicity of commutator based calculus of Lemma~\ref{lm:1}--Lemma~\ref{lm:6}, and yet get beyond the power series representation. If we restrict ourselves to a nicer class of tensors/matrices, then the answer is yes.

In particular, if we work with \emph{symmetric tensors/matrices}, we can work with the spectral decomposition based calculus, that is for $\generictensor \in \SymMat$ we define
\begin{equation}
  \label{eq:93}
  \tensorf{f} \left( \generictensor \right)
  =_{\bydefinition}
  \sum_{i=1}^d f(\lambda_i) \tensortensor{\vec{v}_i}{\vec{v}_i},
\end{equation}
where $\generictensor = \sum_{i=1}^d \lambda_i \tensortensor{\vec{v}_i}{\vec{v}_i}$ denotes the spectral decomposition of the given tensor/matrix. (Here $\left\{ \lambda_i \right\}_{i=1}^d$ denote the eigenvalues of $\generictensor$ and  $\left\{ \vec{v}_i \right\}_{i=1}^d$ denote the orthonormal eigenvectors of $\generictensor$.) Clearly, if the logarithm is defined via~\eqref{eq:93}, we see that for matrices with positive eigenvalues we can go much farther beyond the condition $\absnorm{\generictensor  - \identity} < 1$ necessary for the definition~\eqref{eq:92}, and a similar observation holds for other tensor-matrix valued functions defined by power series expansions. This is good news. On the other hand, the derivatives of tensor/matrix valued functions of tensors/matrices have, in this setting, an explicit representation via the Daleckii--Krein formula, see \cite{daletskii.jl.krein.sg:integration} or \cite{bhatia.r:matrix}, but this \emph{representation is inconvenient} in symbolic manipulations. The formulae based on the commutator $\ad_{\generictensor}$ are much nicer. And here comes the point.

If we restrict ourselves to symmetric tensors/matrices and tensor/matrix valued functions operating on the symmetric tensors/matrices, then we can have the best from both worlds. We can work with the commutator based calculus and all its nice algebraic properties, and yet all the operators can be defined without the underlying formal power series representation, that is without the restrictions imposed by the convergence of the corresponding series.

To achieve this we must define the operator associated to the symbol
\begin{equation}
  \label{eq:95}
  f(\ad_{\generictensor})
\end{equation}
in such a way that the operator $f(\ad_{\generictensor})$ is \emph{applicable to any symmetric tensor/matrix}, and that the operator $f(\ad_{\generictensor})$ \emph{coincides}, for the symmetric tensors/matrices $\tensorq{A}$, \emph{with the operator $f(\ad_{\generictensor})$ defined via the corresponding formal power series}, as long as the power series converges. Furthermore, the operator must inherit all the algebraic properties listed in Lemma~\ref{lm:1}--Lemma~\ref{lm:6}. This can be done using the following definition.

\begin{definition}[Operator $f(\ad_{\tensorq{G}})$ for  $\tensorq{G} \in \SymMat$]
  \label{dfn:1}
  Let $\tensorq{G} \in \SymMat$ be a given symmetric matrix, and let $\left\{ g_i \right\}_{i=1}^d \subset \R$ be its eigenvalues. The matrix $\tensorq{G}$ can be diagonalised by an orthogonal similarity transformation, and we denote the corresponding diagonal matrix by $\tensorq{g}=_{\bydefinition} \diag(g_i)_{i=1}^d$, while the corresponding similarity transformation matrix is denoted by~$\tensorq{Q}$, that is
  \begin{equation}
    \label{eq:96}
    \tensorq{g} = \transpose{\tensorq{Q}} \tensorq{G} \tensorq{Q},
  \end{equation}
  with $\transpose{\tensorq{Q}} = \inverse{\tensorq{Q}}$. For any function $f: \R \to \R$, we define the matrix $f(\times\tensorq{g})$ via its components in the basis wherein the matrix~$\tensorq{G}$ has the diagonal form~$\tensorq{g}$ as
  \begin{equation}
    \label{eq:97}
    \tensor{\left[ f \left( \times\tensorq{g} \right) \right]}{_{ij}} =_\bydefinition f(g_i-g_j), \quad i,j \in\{1,\ldots,d\}.
  \end{equation}
  Furthermore, the linear operator $f(\ad_{\tensorq{G}}): \Mat \to \Mat$ for $\tensorq{X} \in \Mat$ is defined as \begin{equation}\label{had}
    f(\ad_{\tensorq{G}}) \left[ \tensorq{X} \right]
    =_{\bydefinition}
    \tensorq{Q}
    \left(
      \tensorschur{f(\times\tensorq{g})}{ \transpose{\tensorq{Q}} \tensorq{X} \tensorq{Q}}
    \right)
    \transpose{\tensorq{Q}},
  \end{equation}
  where $\tensorschur{}{}$ denotes the Schur/Hadamard product in the basis where the matrix $\tensorq{G}$ has the diagonal form~$\tensorq{g}$.
\end{definition}

In the follow-up work, see~\cite{bathory-calculus}, we rigorously show that the operator $f(\ad_{\tensorq{G}})$ introduced in Definition~\ref{dfn:1} has all the required properties, and that the \emph{commutator based functional calculus} on symmetric tensors/matrices is well-defined. 

%\todo[inline]{Go carefully through the text and check what holds for symmetric/general matrices, for example, explain that $\generictensor \tensorq{X} + \tensorq{X} \generictensor$ in~\eqref{eq:13} must be symmetric, otherwise we won't be able to define the derivative---the direction must be a symmetric matrix if we want to work with spectral calculus for $\ln \generictensor$ and so forth. Someone should do this for infinite dimensional operators---spectral measure :-)}

% Using the functional calculus we can also show the equivalence between the standard representation formula~\eqref{eq:4} and the new representation formula~\eqref{eq:7} by direct substitution, see Section~TODO.
 
%%% Local Variables:
%%% mode: LaTeX
%%% TeX-master: "../logarithmic-corotational-derivative-note"
%%% End:

\bibliographystyle{chicago}
\bibliography{vit-prusa,bathory-calculus}

@misc{bathory-calculus,
      title={Commutator calculus and symbolic differentiation of matrix functions}, 
      author={Michal Bathory},
      year={2025} 
}

@Article{	  holthausen.s.ea:constitutive,
  author	= {d'Agostino, Marco Valerio and Holthausen, Sebastian and
		  Bernardini, Davide and Sky, Adam and Ghiba, Ionel-Dumitrel
		  and Martin, Robert J. and Neff, Patrizio},
  title		= {A constitutive condition for idealized isotropic {C}auchy
		  elasticity involving the logarithmic strain},
  year		= {2025},
  journal	= {Journal of Elasticity},
  doi		= {10.1007/s10659-024-10097-2},
  volume	= {157}
}

@Book{		  abramowitz.m.stegun.ia:handbook,
  editor	= {M. Abramowitz and I. A. Stegun},
  title		= {Handbook of mathematical functions with formulas, graphs,
		  and mathematical tables},
  publisher	= {United States Dept. of Commerce, National Bureau of
		  Standards},
  year		= {1964},
  series	= {Applied Mathematics Series},
  address	= {Washington D.C.},
  number	= {55}
}

@Article{	  aubram.d:notes*1,
  author	= {Daniel Aubram},
  title		= {Notes on rate equations in nonlinear continuum mechanics},
  journal	= {Mathematics and Mechanics of Solids},
  volume	= {0},
  number	= {0},
  pages		= {10812865241288526},
  year		= {2025},
  doi		= {10.1177/10812865241288526},
  abstract	= { This paper gives an introduction to rate equations in
		  nonlinear continuum mechanics which should obey specific
		  transformation rules. Emphasis is placed on the geometrical
		  nature of the operations involved in order to clarify the
		  different concepts. This paper is particularly concerned
		  with common classes of constitutive equations based on
		  corotational stress rates and their proper implementation
		  in time for solving initial boundary value problems
		  (IBVPs). Hypoelastic simple shear is considered as an
		  example application for the derived theory and algorithms.
		  }
}

@Book{		  bhatia.r:matrix,
  title		= {Matrix analysis},
  author	= {Bhatia, Rajendra},
  volume	= {169},
  year		= {2013},
  publisher	= {Springer}
}

@Article{	  bruhns.ot.meyers.a.ea:on,
  title		= {On non--corotational rates of {O}ldroyd's type and
		  relevant issues in rate constitutive formulations},
  author	= {Bruhns, O. T. and Meyers, A. and Xiao, H.},
  journal	= {Proc. R. Soc. A: Math. Phys. Eng. Sci.},
  volume	= {460},
  number	= {2043},
  pages		= {909--928},
  year		= {2004},
  doi		= {10.1098/rspa.2003.1184}
}

@InCollection{	  daletskii.jl.krein.sg:integration,
  title		= {Integration and differentiation of functions of
		  {H}ermitian operators and applications to the theory of
		  perturbations},
  author	= {Daletskii, Ju. L. and Krein, S. G.},
  booktitle	= {Thirteen papers on functional analysis and partial
		  differential equations},
  optcrossref	= {},
  optkey	= {},
  publisher	= {American Mathematical Society},
  year		= {1965},
  opteditor	= {},
  volume	= {47},
  optnumber	= {},
  series	= {American Mathematical Society Translations},
  opttype	= {},
  optchapter	= {},
  pages		= {1--30},
  optedition	= {},
  optmonth	= {},
  address	= {Providence, Rhode Island},
  optnote	= {},
  optannote	= {}
}

@Book{		  hall.b:lie,
  author	= {Hall, Brian},
  title		= {Lie groups, {Lie} algebras, and representations. {An}
		  elementary introduction},
  edition	= {2nd ed.},
  fseries	= {Graduate Texts in Mathematics},
  series	= {Grad. Texts Math.},
  issn		= {0072-5285},
  volume	= {222},
  isbn		= {978-3-319-13466-6; 978-3-319-37433-8; 978-3-319-13467-3},
  year		= {2015},
  publisher	= {Cham: Springer},
  language	= {English},
  doi		= {10.1007/978-3-319-13467-3},
  keywords	= {22-01,22E60,17-01},
  zbmath	= {6423130},
  zbl		= {1316.22001}
}

@Article{	  kolev.b.desmorat.r:objective,
  author	= {Kolev, B. and Desmorat, R.},
  title		= {Objective rates as covariant derivatives on the manifold
		  of {Riemannian} metrics},
  fjournal	= {Archive for Rational Mechanics and Analysis},
  journal	= {Arch. Ration. Mech. Anal.},
  volume	= {248},
  number	= {4},
  pages		= {54},
  note		= {Id/No 66},
  year		= {2024},
  language	= {English},
  doi		= {10.1007/s00205-024-02010-x},
  keywords	= {74A99,53Z05},
  zbmath	= {7899608}
}

@Book{		  marsden.je.hughes.tjr:mathematical,
  author	= {Marsden, Jerrold E. and Hughes, Thomas J. R.},
  title		= {Mathematical foundations of elasticity},
  note		= {Corrected reprint of the 1983 original},
  publisher	= {Dover Publications Inc.},
  address	= {New York},
  year		= {1994},
  pages		= {xviii+556},
  isbn		= {0-486-67865-2},
  mrclass	= {73Cxx (35Q72 58-01 73-01 73Bxx 73V99)},
  mrnumber	= {MR1262126 (95h:73022)},
  mrreviewer	= {Ralph Saxton}
}

@Misc{		  neff.p.holthausen.s.ea:hypo-elasticity,
  title		= {Hypo-elasticity, {C}auchy-elasticity, corotational stability
		  and monotonicity in the logarithmic strain},
  author	= {Patrizio Neff and Sebastian Holthausen and Marco Valerio
		  d'Agostino and Davide Bernardini and Adam Sky and
		  Ionel-Dumitrel Ghiba and Robert J. Martin},
  year		= {2024},
  eprint	= {2409.20051},
  archiveprefix	= {arXiv},
  primaryclass	= {math.AP},
  url		= {https://arxiv.org/abs/2409.20051}
}

@Article{	  oldroyd.jg:on,
  title		= {On the formulation of rheological equations of state},
  author	= {Oldroyd, J. G.},
  journal	= {Proc. R. Soc. A: Math. Phys. Eng. Sci.},
  volume	= {200},
  number	= {1063},
  pages		= {523--541},
  year		= {1950}
}

@Article{	  szabo.l.balla.m:comparison,
  title		= "Comparison of some stress rates",
  journal	= "Int. J. Solids and Structures",
  volume	= "25",
  number	= "3",
  pages		= "279--297",
  year		= "1989",
  doi		= "10.1016/0020-7683(89)90049-8",
  author	= "Szab\'o, L\'aszl\'o and Balla, Mih\'aly",
  abstract	= "This paper introduces a new derivation rule by means of
		  which two new stress rates are defined. Furthermore, the
		  paper compares five stress rates: Jaumann's, Truesdell's,
		  Green-Naghdi's, Sowerby-Chu's and Durban-Baruch's, for
		  simple shear. Elastic, elastic-perfectly plastic and
		  elastic-plastic hardening (isotropic, kinematic, and
		  combined) material models are considered. Different
		  solutions have already been published for these cases,
		  except for Sowerby-Chu and Durban-Baruch time derivatives.
		  Using the new derivation rule the new rate form of the
		  hyperelastic Doyle-Ericksen formula is obtained. Taking
		  advantage of the Sowerby-Chu stress rate a new constitutive
		  equation for hypoelastic material is given."
}

@Book{		  truesdell.c.noll.w:non-linear*1,
  author	= "Truesdell, Clifford and Noll, Walter",
  title		= "The non-linear field theories of mechanics",
  language	= "English",
  publisher	= "Springer",
  year		= "2004",
  address	= "Berlin",
  edition	= "3rd",
  reviewer	= "{Albrecht Bertram (Magdeburg)}",
  classmath	= "{*74-02 (Research monographs (mechanics of deformable
		  solids)) 74Axx (Generalities, etc. of continuum mechanics
		  of solids) }"
}

@Article{	  xiao.h.bruhns.ot.ea:explicit,
  author	= "Xiao, H. and Bruhns, O.T. and Meyers, A.",
  title		= "Explicit dual stress-strain and strain-stress relations of
		  incompressible isotropic hyperelastic solids via deviatoric
		  {H}encky strain and {C}auchy stress",
  journal	= "Acta Mech.",
  year		= "2004",
  day		= "01",
  volume	= "168",
  number	= "1",
  pages		= "21--33",
  doi		= "10.1007/s00707-004-0074-5"
}

@Article{	  xiao.h.bruhns.ot.ea:logarithmic,
  author	= {Xiao, H. and Bruhns, O. T. and Meyers, A.},
  title		= {Logarithmic strain, logarithmic spin and logarithmic
		  rate},
  fjournal	= {Acta Mechanica},
  journal	= {Acta Mech.},
  issn		= {0001-5970},
  volume	= {124},
  number	= {1-4},
  pages		= {89--105},
  year		= {1997},
  language	= {English},
  doi		= {10.1007/BF01213020},
  keywords	= {74A99,74A20},
  zbmath	= {1143322},
  zbl		= {0909.73006}
}

@Article{	  xiao.h.bruhns.ot.ea:natural,
  author	= "Xiao, H. and Bruhns, O.T. and Meyers, A.",
  title		= "A natural generalization of hypoelasticity and {E}ulerian
		  rate type formulation of hyperelasticity",
  journal	= "J. Elast.",
  year		= "1999",
  volume	= "56",
  number	= "1",
  pages		= "59--93",
  abstract	= "According to the classical hypoelasticity theory, the
		  hypoelasticity tensor, i.e. the fourth order Eulerian
		  constitutive tensor, characterizing the linear relationship
		  between the stretching and an objective stress rate, is
		  dependent on the current stress and must be isotropic.
		  Although the classical hypoelasticity in this sense
		  includes as a particular case the isotropic elasticity, it
		  fails to incorporate any given type of anisotropic
		  elasticity. This implies that one can formulate the
		  isotropic elasticity as an integrable-exactly classical
		  hypoelastic relation, whereas one can in no way do the same
		  for any given type of anisotropic elasticity. A
		  generalization of classical theory is available, which
		  assumes that the material time derivative of the rotated
		  stress is dependent on the rotated Cauchy stress, the
		  rotated stretching and a Lagrangean spin, linear and of the
		  first degree in the latter two. As compared with the
		  original idea of classical hypoelasticity, perhaps the
		  just-mentioned generalization might be somewhat drastic. In
		  this article, we show that, merely replacing the isotropy
		  property of the aforementioned stress-dependent
		  hypoelasticity tensor with the invariance property of the
		  latter under an R-rotating material symmetry group R⋆G0,
		  one may establish a natural generalization of classical
		  theory, which includes all of elasticity. Here R is the
		  rotation tensor in the polar decomposition of the
		  deformation gradient and G0 any given initial material
		  symmetry group. In particular, the classical case is
		  recovered whenever the material symmetry is assumed to be
		  isotropic. With the new generalization it is demonstrated
		  that any two non-integrable hypoelastic relations based on
		  any two objective stress rates predict quite different
		  path-dependent responses in nature and hence can in no
		  sense be equivalent. Thus, the non-integrable hypoelastic
		  relations based on any given objective stress rate
		  constitute an independent constitutive class in its own
		  right which is disjoint with and hence distinguishes itself
		  from any class based on another objective stress rate. Only
		  for elasticity, equivalent hypoelastic formulations based
		  on different stress rates may be established. Moreover,
		  universal integrability conditions are derived for all
		  kinds of objective corotational stress rates and for all
		  types of material symmetry. Explicit, simple,
		  integrable-exactly hypoelastic relations based on the newly
		  discovered logarithmic stress rate are presented to
		  characterize hyperelasticity with any given type of
		  material symmetry. It is shown that, to achieve the latter
		  goal, the logarithmic stress rate is the only choice among
		  all infinitely many objective corotational stress rates.",
  doi		= "10.1023/A:1007677619913"
}

@Article{	  xiao.h.bruhns.ot.ea:objective,
  author	= {Xiao, H. and Bruhns, O. T. and Meyers, A.},
  title		= {Objective corotational rates and unified work-conjugacy
		  relation between {E}ulerian and {L}agrangean strain and
		  stress measures},
  journal	= {Arch. Mech. (Arch. Mech. Stos.)},
  fjournal	= {Polish Academy of Sciences. Institute of Fundamental
		  Technological Research. Archives of Mechanics (Archiwum
		  Mechaniki Stosowanej)},
  volume	= {50},
  year		= {1998},
  number	= {6},
  pages		= {1015--1045},
  issn		= {0373-2029,2083-8514},
  mrclass	= {74A10 (74A20)},
  mrnumber	= {1699940}
}

\appendix

\section{Classical Lie algebra formulae from the perspective of power series based calculus}
\label{sec:lie-algebra-formulae}
For the sake of completeness we present a derivation of some well known formulae for the commutator operator and the matrix exponential we have used without proof in the main text body. All arguments are based on the power series calculus. First, we derive a representation formula for the powers of the commutator operator. This representation formula and its proof are known, see, for example, \cite[Exercise 14, Chapter 3]{hall.b:lie}.

\begin{lemma}[Powers of commutator operator]
  \label{lm:commutator-power}
  Let $\generictensor \in \Mat$ and $\tensorq{X} \in \Mat$ be arbitrary matrices and let $m \in \N$. Then
  \begin{equation}
    \label{eq:appendix-1}
    \ad_{\tensorq A}^m \left[ \tensorq{X} \right]
    =
    \sum_{k=0}^m
    \binom{m}{k}
    \generictensor^k
    \tensorq{X}
    \left(
      -
      \generictensor
    \right)^{m-k}
    .
  \end{equation}
\end{lemma}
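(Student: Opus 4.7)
My plan is to give two independent proofs and pick whichever reads cleanest in context. Both proofs hinge on the elementary observation that left and right multiplication by a fixed matrix commute as linear operators on $\Mat$, even though $\generictensor$ itself generally fails to commute with $\tensorq{X}$.

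The conceptually cleanest route is to introduce the auxiliary operators $L_{\generictensor}, R_{\generictensor}\colon \Mat \to \Mat$ defined by $L_{\generictensor}[\tensorq{X}] =_{\bydefinition} \generictensor \tensorq{X}$ and $R_{\generictensor}[\tensorq{X}] =_{\bydefinition} \tensorq{X} \generictensor$. Associativity of matrix multiplication immediately gives $L_{\generictensor} R_{\generictensor}[\tensorq{X}] = \generictensor \tensorq{X} \generictensor = R_{\generictensor} L_{\generictensor}[\tensorq{X}]$, so $L_{\generictensor}$ and $R_{\generictensor}$ are commuting elements of the associative algebra of linear operators on $\Mat$. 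By the definition of the commutator operator, see~\eqref{eq:8}, we have the operator identity $\ad_{\generictensor} = L_{\generictensor} - R_{\generictensor}$. Since the binomial theorem is valid for any two commuting elements of an associative algebra, it follows that $\ad_{\generictensor}^m = \sum_{k=0}^m \binom{m}{k} L_{\generictensor}^k \left( - R_{\generictensor} \right)^{m-k}$, and evaluating this identity on an arbitrary $\tensorq{X} \in \Mat$ produces exactly~\eqref{eq:appendix-1}.

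As a backup, should one prefer to avoid introducing auxiliary operators, the formula can be proved by direct induction on $m$. The base case $m=0$ is the trivial equality $\tensorq{X} = \binom{0}{0} \generictensor^0 \tensorq{X} (-\generictensor)^0$. For the inductive step one writes $\ad_{\generictensor}^{m+1}[\tensorq{X}] = \generictensor \, \ad_{\generictensor}^m[\tensorq{X}] - \ad_{\generictensor}^m[\tensorq{X}] \, \generictensor$, substitutes the induction hypothesis into both terms, performs the index shift $k \mapsto k-1$ in the first sum, uses the sign identity $(-\generictensor)^{m-k} \generictensor = -(-\generictensor)^{m-k+1}$ to rewrite the second, and combines the two sums via Pascal's rule $\binom{m}{k-1} + \binom{m}{k} = \binom{m+1}{k}$; the boundary terms at $k=0$ and $k=m+1$ match up automatically.

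There is no real obstacle here: the only thing to watch is the bookkeeping of signs and the boundary indices in the inductive argument, both of which are routine. I would favor the operator-theoretic proof in the text, since it also foreshadows the more general functional-calculus perspective developed in Section~\ref{sec:conclusion} and in~\cite{bathory-calculus}, where $\ad_{\generictensor}$ is manipulated algebraically as a single object rather than expanded term by term.
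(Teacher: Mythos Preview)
Your proposal is correct. Your backup inductive argument is essentially the paper's own proof: the paper also proceeds by induction, writing $\ad_{\generictensor}^{m+1}[\tensorq{X}] = \ad_{\generictensor}[\ad_{\generictensor}^{m}[\tensorq{X}]]$, splitting into two sums, shifting the index in one of them, and recombining via Pascal's rule $\binom{m+1}{l} = \binom{m}{l-1} + \binom{m}{l}$. The only cosmetic difference is that the paper starts the induction at $m=1$ rather than $m=0$.

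Your preferred operator-theoretic proof, however, is a genuinely different route. The paper does not introduce the left/right multiplication operators $L_{\generictensor}$ and $R_{\generictensor}$ here, and instead does the bookkeeping of signs and boundary terms by hand. Your approach trades that bookkeeping for a single structural observation---that $L_{\generictensor}$ and $R_{\generictensor}$ commute, so the binomial theorem applies to $\ad_{\generictensor} = L_{\generictensor} - R_{\generictensor}$---which yields the identity in one line. This is shorter and, as you note, aligns better with the spirit of the commutator-based functional calculus the paper is advocating, where $\ad_{\generictensor}$ is treated as a single algebraic object. The paper's inductive proof, by contrast, has the modest advantage of being entirely self-contained and not requiring the reader to accept the binomial theorem in a general associative algebra.
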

\begin{proof}
  We prove the proposition by induction. The proposition holds for $m=1$. Now we assume that the proposition~\eqref{eq:appendix-1} holds form $m$, and we want to show that it also holds for $m+1$, that is we want to show
  \begin{equation}
    \label{eq:appendix-2}
    \ad_{\tensorq A}^{m+1} \left[ \tensorq{X} \right]
    =
    \sum_{k=0}^{m+1}
    \binom{m+1}{k}
    \generictensor^k
    \tensorq{X}
    \left(
      -
      \generictensor
    \right)^{m+1-k}
    .
  \end{equation}
  This follows form a simple manipulation based on the binomial coefficients identity $\binom{m+1}{l} = \binom{m}{l-1} + \binom{m}{l}$. We have
  \begin{multline}
    \label{eq:appendix-3}
    \ad_{\tensorq A}^{m+1} \left[ \tensorq{X} \right]
    =
    \ad_{\tensorq A} \left[\ad_{\tensorq A}^m  \left[ \tensorq{X} \right] \right]
    =
    \ad_{\tensorq A}
    \left[
      \sum_{k=0}^m
      \binom{m}{k}
      \generictensor^k
      \tensorq{X}
      \left(
        -
        \generictensor
      \right)^{m-k}
    \right]
    =
      \sum_{k=0}^m
      \binom{m}{k}
      \generictensor^{k+1}
      \tensorq{X}
      \left(
        -
        \generictensor
      \right)^{m-k}
      +
      \sum_{k=0}^m
      \binom{m}{k}
      \generictensor^{k}
      \tensorq{X}
      \left(
        -
        \generictensor
      \right)^{m+1-k}
      \\
      =
      \sum_{l=1}^{m+1}
      \binom{m}{l-1}
      \generictensor^{l}
      \tensorq{X}
      \left(
        -
        \generictensor
      \right)^{m+1-l}
      +
      \sum_{l=0}^{m}
      \binom{m}{l}
      \generictensor^{l}
      \tensorq{X}
      \left(
        -
        \generictensor
      \right)^{m+1-l}
      \\
      =
      \binom{m}{0}
      \generictensor^{0}
      \tensorq{X}
      \left(
        -
        \generictensor
      \right)^{m+1}
      +
      \sum_{l=1}^{m}
      \left(
        \binom{m}{l-1}
        +
        \binom{m}{l}
        \right)
      \generictensor^{l}
      \tensorq{X}
      \left(
        -
        \generictensor
      \right)^{m+1-l}
      +
      \binom{m}{m}
      \generictensor^{m+1}
      \tensorq{X}
      \left(
        -
        \generictensor
      \right)^{0}
      =
      \sum_{l=0}^{m+1}
      \binom{m+1}{l}
      \generictensor^l
      \tensorq{X}
      \left(
        -
        \generictensor
      \right)^{m+1-l}
      .
    \end{multline}
\end{proof}
The formula for the powers of commutator operator immediately gives us the Campbell lemma, see, for example, \cite[Proposition 3.35]{hall.b:lie}. 
\begin{lemma}[Campbell formula, exponential of the commutator operator]
  \label{lm:campbell}
  Let $\generictensor \in \Mat$ and $\tensorq{X} \in \Mat$ be arbitrary matrices, then
  \begin{equation}
    \label{eq:appendix-5}
    \exponential{\ad_{\generictensor}}
    \left[
      \tensorq{X}
    \right]
    =
    \exponential{\generictensor} \tensorq{X} \exponential{-\generictensor}    
    .
  \end{equation}
\end{lemma}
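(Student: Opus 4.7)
The plan is to derive the Campbell formula directly from the power series definition of the matrix exponential of the operator $\ad_{\generictensor}$, invoking Lemma~\ref{lm:commutator-power} to rewrite each power $\ad_{\generictensor}^m[\tensorq{X}]$ as a finite binomial sum, and then recognizing the resulting double series as the Cauchy product of $\exponential{\generictensor}$ with $\exponential{-\generictensor}$ sandwiching $\tensorq{X}$.

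More concretely, I would first write
\begin{equation*}
  \exponential{\ad_{\generictensor}}[\tensorq{X}] = \sum_{m=0}^{+\infty} \frac{1}{m!} \ad_{\generictensor}^m[\tensorq{X}],
\end{equation*}
which is the definition of the operator exponential applied to $\tensorq{X}$. Then I would substitute the representation~\eqref{eq:appendix-1} from Lemma~\ref{lm:commutator-power} and use the elementary identity $\binom{m}{k}/m! = 1/(k!\,(m-k)!)$ to obtain
\begin{equation*}
  \exponential{\ad_{\generictensor}}[\tensorq{X}] = \sum_{m=0}^{+\infty} \sum_{k=0}^{m} \frac{\generictensor^k}{k!} \, \tensorq{X} \, \frac{(-\generictensor)^{m-k}}{(m-k)!}.
\end{equation*}

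Next I would reindex by setting $j = m-k$, so that the constraint $0 \leq k \leq m$ becomes $k, j \geq 0$ independently, and the double sum factorises as
\begin{equation*}
  \left( \sum_{k=0}^{+\infty} \frac{\generictensor^k}{k!} \right) \tensorq{X} \left( \sum_{j=0}^{+\infty} \frac{(-\generictensor)^{j}}{j!} \right) = \exponential{\generictensor}\, \tensorq{X}\, \exponential{-\generictensor},
\end{equation*}
which is the claimed identity. The main technical point, which is the only nontrivial obstacle, is the justification of the rearrangement of the double series; this is standard since each of the series $\sum_k \generictensor^k/k!$ and $\sum_j (-\generictensor)^j/j!$ converges absolutely in any matrix norm (indeed $\exponential{\generictensor}$ converges for every $\generictensor \in \Mat$), so Fubini's theorem for absolutely convergent double series applies entrywise. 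With this justification in place the proof is complete.
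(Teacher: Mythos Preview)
Your proof is correct and follows essentially the same approach as the paper: both arguments expand $\exponential{\ad_{\generictensor}}[\tensorq{X}]$ via its power series, apply Lemma~\ref{lm:commutator-power} to rewrite $\ad_{\generictensor}^m[\tensorq{X}]$ as a binomial sum, simplify $\binom{m}{k}/m!$ to $1/(k!\,(m-k)!)$, and identify the result with the Cauchy product $\exponential{\generictensor}\tensorq{X}\exponential{-\generictensor}$. The only cosmetic difference is that the paper computes both sides separately and compares, whereas you transform one side into the other and add an explicit remark on absolute convergence to justify the rearrangement.
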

\begin{proof}
  The standard proof is based on direct calculation with the matrix exponential power series. Using the Cauchy product for power series we first find that
  \begin{equation}
    \label{eq:appendix-6}
    \exponential{\generictensor} \tensorq{X} \exponential{-\generictensor}
    =
    \left(\sum_{k=0}^{+\infty} \frac{ \generictensor^k }{k!} \right)
    \tensorq{X}
    \left(\sum_{l=0}^{+\infty} \frac{ \left( - \generictensor \right)^l }{l!} \right)
    =
    \sum_{j=0}^{+\infty}
    \sum_{l=0}^{j}
    \frac{\generictensor^l }{l!}
    \tensorq{X}
    \frac{ \left(-\generictensor \right)^{j-l} }{\left( j-l \right)!}
    .
  \end{equation}
  On the other hand, the formula for exponential of the commutator operator gives us
  \begin{equation}
    \label{eq:appendix-7}
    \exponential{\ad_{\generictensor}}
    \left[
      \tensorq{X}
    \right]
    =
    \sum_{m=0}^{+\infty} \frac{\ad_{\generictensor}^m \left[ \tensorq{X} \right]}{m!}
    =
    \sum_{m=0}^{+\infty}
    \frac{1}{m!}
    \sum_{k=0}^m
    \binom{m}{k}
    \generictensor^k
    \tensorq{X}
    \left(
      -
      \generictensor
    \right)^{m-k}
    =
    \sum_{m=0}^{+\infty}
    \sum_{k=0}^m
    \frac{\generictensor^k}{k!}
    \tensorq{X}
    \frac{
      \left(
        -
        \generictensor
      \right)^{m-k}
    }
    {
      \left(m-k\right)!
    }
    ,
  \end{equation}
  where the second equality in~\eqref{eq:appendix-7} follows from Lemma~\ref{lm:commutator-power}.
\end{proof}

Using Lemma~\ref{lm:campbell} it is straightforward to find a commutator based formula for the derivative of the matrix exponential, see, for example, \cite[Theorem 5.4]{hall.b:lie}. Note however that our the proof of Lemma~\ref{lm:derivative-of-matrix-exponential} is different from the standard one, and it is based on the commutator calculus.

\begin{lemma}[Derivative of matrix exponential]
  \label{lm:derivative-of-matrix-exponential}
  Let $\generictensor \in \Mat$ and  $\tensorq{X} \in \Mat$ be arbitrary given matrices, then
  \begin{equation}
    \label{eq:appendix-4}
    \pd{\exponential{\generictensor}}{\generictensor} \left[ \tensorq{X} \right]
    =
    \exponential{\generictensor} \frac{1 - \exponential{-\ad_{\generictensor}}}{\ad_{\generictensor}} [\tensorq{X}].
  \end{equation}
\end{lemma}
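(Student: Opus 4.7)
The plan is a direct formal power series computation, relying on Lemma~\ref{lm:commutator-power} for the powers of $\ad_\generictensor$ and on the product-rule manipulation analogous to~\eqref{eq:99} for the G\^ateaux derivative of a power of a matrix. First I would expand the left-hand side: using the definition of the matrix exponential and the computation performed in~\eqref{eq:99}, I obtain
\begin{equation*}
\pd{\exponential{\generictensor}}{\generictensor}[\tensorq{X}]
= \sum_{n=1}^{\infty} \frac{1}{n!}\sum_{k=0}^{n-1}\generictensor^k \tensorq{X}\generictensor^{n-1-k}
= \sum_{a,b\ge 0}\frac{\generictensor^a \tensorq{X}\generictensor^b}{(a+b+1)!},
\end{equation*}
where in the second equality I set $a=k$ and $b=n-1-k$.

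For the right-hand side, I would combine the formal power series $\frac{1-\exponential{-x}}{x} = \sum_{m=0}^{\infty}\frac{(-x)^m}{(m+1)!}$ with the explicit expansion $\ad_\generictensor^m[\tensorq{X}] = \sum_{k=0}^m \binom{m}{k}\generictensor^k \tensorq{X}(-\generictensor)^{m-k}$ provided by Lemma~\ref{lm:commutator-power}, and then multiply from the left by $\exponential{\generictensor} = \sum_{p=0}^{\infty}\generictensor^p/p!$. Collecting terms of the form $\generictensor^a \tensorq{X}\generictensor^b$ with $a = p+k$ and $b = m-k$ yields
\begin{equation*}
\exponential{\generictensor}\,\frac{1-\exponential{-\ad_\generictensor}}{\ad_\generictensor}[\tensorq{X}]
= \sum_{a,b \ge 0} c_{a,b}\,\generictensor^a \tensorq{X}\generictensor^b,
\qquad
c_{a,b} = \sum_{k=0}^a \frac{(-1)^k}{(a-k)!\,k!\,b!\,(b+k+1)}.
\end{equation*}

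Matching the two sides reduces the proof to the single scalar identity $c_{a,b} = 1/(a+b+1)!$, which is the heart of the argument and where I expect the main obstacle to lie. I would establish it via the beta-function trick $\frac{1}{b+k+1} = \int_0^1 t^{b+k}\diff t$, which gives
\begin{equation*}
c_{a,b}
= \frac{1}{a!\,b!}\int_0^1 t^b \sum_{k=0}^a \binom{a}{k}(-t)^k\,\diff t
= \frac{1}{a!\,b!}\int_0^1 t^b(1-t)^a\,\diff t
= \frac{1}{(a+b+1)!},
\end{equation*}
using the standard beta-function evaluation. The rest of the argument is purely bookkeeping made painless by Lemma~\ref{lm:commutator-power}. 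An alternative, combinatorics-free route would be to introduce the auxiliary curve $U(t) = \exponential{-t\generictensor}\exponential{t(\generictensor+s\tensorq{X})}$, verify by direct computation (and using Campbell's formula from Lemma~\ref{lm:campbell}) that $\pd{U}{t} = s\,\exponential{-t\ad_\generictensor}[\tensorq{X}]\,U$, and then expand to first order in $s$ at $t=1$; this route bypasses the combinatorial identity at the price of an ODE argument but keeps the commutator viewpoint in the foreground.
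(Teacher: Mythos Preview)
Your proof is correct, but it follows a genuinely different route from the paper's. You carry out a head-on double power-series expansion: expand both sides in monomials $\generictensor^a\tensorq{X}\generictensor^b$ and reduce everything to the scalar identity $\sum_{k=0}^a \frac{(-1)^k}{(a-k)!\,k!\,b!\,(b+k+1)}=\frac{1}{(a+b+1)!}$, which you then dispatch with the beta integral. The paper instead argues indirectly within the commutator calculus: it differentiates the algebraic identity $\exponential{-\generictensor}\generictensor\,\exponential{\generictensor}=\generictensor$ in the direction $\tensorq{X}$, uses the relation $\pd{\exponential{-\generictensor}}{\generictensor}[\tensorq{X}]=-\exponential{-\generictensor}\pd{\exponential{\generictensor}}{\generictensor}[\tensorq{X}]\exponential{-\generictensor}$ (obtained by differentiating $\exponential{\generictensor}\exponential{-\generictensor}=\identity$), and invokes Campbell's formula to recognise $\exponential{-\generictensor}\tensorq{X}\exponential{\generictensor}=\exponential{-\ad_\generictensor}[\tensorq{X}]$; this yields the operator equation $\ad_\generictensor\!\left[\exponential{-\generictensor}\pd{\exponential{\generictensor}}{\generictensor}[\tensorq{X}]\right]=(1-\exponential{-\ad_\generictensor})[\tensorq{X}]$, which is then formally inverted. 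Your approach is self-contained and explicit but requires the extraneous beta-function identity; the paper's approach avoids any combinatorics and keeps the argument entirely in the language of the commutator calculus being developed, which is precisely the point the appendix is making. Your proposed alternative ODE route via $U(t)=\exponential{-t\generictensor}\exponential{t(\generictensor+s\tensorq{X})}$ is the standard textbook argument and is again distinct from the paper's.
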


\begin{proof}
  Our objective is to find an operator equation that is satisfied by the derivative, and then to solve this equation. We first differentiate the identity matrix $\identity = \exponential{\generictensor - \generictensor}$ using the product rule, and we get
  \begin{equation}
    \label{eq:appendix-8}
    \tensorzero
    =
    \pd{}{\generictensor}
    \left(
       \exponential{\generictensor - \generictensor}
    \right)
    \left[
      \tensorq{X}
    \right]
    =
    \pd{}{\generictensor}
    \left(
       \exponential{\generictensor}\exponential{- \generictensor}
    \right)
    \left[
      \tensorq{X}
    \right]
    =
    \pd{\exponential{\generictensor}}{\generictensor}
    \left[
      \tensorq{X}
    \right]
    \exponential{- \generictensor}
    +
    \exponential{\generictensor}
    \pd{\exponential{- \generictensor}}{\generictensor}
    \left[
      \tensorq{X}
    \right]
    .
  \end{equation}
  This manipulation reveals that
  \begin{equation}
    \label{eq:appendix-9}
    \pd{\exponential{- \generictensor}}{\generictensor}
    \left[
      \tensorq{X}
    \right]
    =
    -
    \exponential{- \generictensor}
    \pd{\exponential{\generictensor}}{\generictensor}
    \left[
      \tensorq{X}
    \right]
    \exponential{- \generictensor}
    .
  \end{equation}
  This is not surprising since this is in fact a formula that follows from the derivative of matrix inverse and the chain rule.
  
  Now we differentiate the identity 
  \begin{equation}
    \label{eq:appendix-10}
    \exponential{- \generictensor}
    \generictensor
    \exponential{\generictensor}
    =
    \generictensor
  \end{equation}
  using the product rule. (Recall that the matrices $\generictensor$, $\exponential{\generictensor}$ and $\exponential{-\generictensor}$ commute.) The differentiation of~\eqref{eq:appendix-10} in the direction $\tensorq{X}$ yields
  \begin{equation}
    \label{eq:appendix-11}
    \pd{\exponential{-\generictensor}}{\generictensor}
    \left[
      \tensorq{X}
    \right]
    \generictensor
    \exponential{\generictensor}
    +
    \exponential{- \generictensor}
    \tensorq{X}
    \exponential{\generictensor}
    +
    \exponential{- \generictensor}
    \generictensor
    \pd{\exponential{\generictensor}}{\generictensor}
    \left[
      \tensorq{X}
    \right]
    =
    \tensorq{X}
    .
  \end{equation}
  We use the identity for the derivative of the exponential $\exponential{-\generictensor}$, see~\eqref{eq:appendix-9}, and we regroup the terms in~\eqref{eq:appendix-11} as
  \begin{equation}
    \label{eq:appendix-12}
    -
    \exponential{- \generictensor}
    \pd{\exponential{\generictensor}}{\generictensor}
    \left[
      \tensorq{X}
    \right]
    \generictensor
    +
    \generictensor
    \exponential{- \generictensor}
    \pd{\exponential{\generictensor}}{\generictensor}
    \left[
      \tensorq{X}
    \right]
    +
    \exponential{- \generictensor}
    \tensorq{X}
    \exponential{\generictensor}
    =
    \tensorq{X}.
  \end{equation}
  The first two terms on the left-hand side give us the commutator between $\generictensor$ and
  $
  \exponential{- \generictensor}
  \pd{\exponential{\generictensor}}{\generictensor}
  \left[
    \tensorq{X}
  \right]$,
  while the third term on the left-hand side can be rewritten using Lemma~\ref{lm:campbell}. This yields
  \begin{equation}
    \label{eq:appendix-13}
    \ad_{\generictensor}
    \left[
      \exponential{- \generictensor}
      \pd{\exponential{\generictensor}}{\generictensor}
      \left[
        \tensorq{X}
      \right]
    \right]
    +
    \exponential{-\ad_\generictensor} \left[ \tensorq{X} \right]
    =
    \tensorq{X},
  \end{equation}
  and we can formally manipulate this equation as 
  \begin{equation}
    \label{eq:appendix-14}
      \exponential{- \generictensor}
      \pd{\exponential{\generictensor}}{\generictensor}
      \left[
        \tensorq{X}
      \right]
      =
      \frac{
        1
        -
        \exponential{-\ad_\generictensor}
      }
      {
        \ad_{\generictensor}
      }
      \left[
        \tensorq{X}
      \right],
    \end{equation}
    which gives us the proposition~\eqref{eq:appendix-4} in Lemma~\ref{lm:derivative-of-matrix-exponential}.
\end{proof}

%%% Local Variables:
%%% mode: LaTeX
%%% TeX-master: "../logarithmic-corotational-derivative-note"
%%% End:

%\addtocontents{toc}{\protect\end{multicols}} % workaround for table of contents in two columns in amsart documentclass
\end{document}